\newtheorem{theorem}{Theorem}[section]
\newtheorem{definition}[theorem]{Definition}
\newtheorem{lemma}[theorem]{Lemma}
\newtheorem{corollary}[theorem]{Corollary}
\newtheorem{remark}[theorem]{Remark}
\newtheorem{informal}{Informal definition}
\newcommand{\HH}{\mathcal{H}}
\newcommand{\Id}{\mathrm{Id}}
\newcommand{\CC}{\mathbb{C}}
\newcommand{\NN}{\mathbb{N}}
\newcommand{\BB}{\mathcal{B}}
\newcommand{\UU}{\mathcal{U}}
\newcommand{\Ss}{\mathcal{S}}
\newcommand{\Ff}{\mathcal{F}}
\newcommand{\OO}{\mathcal{O}}
\newcommand{\probP}{\text{I\kern-0.15em P}}
\newcommand{\RE}{\mathcal{R}}
\newcommand{\DD}{\mathcal{D}}
\newcommand{\CI}{\mathcal{C}}
\newcommand{\bea}{\begin{eqnarray}} 
\newcommand{\eea}{\end{eqnarray}}
\newcommand{\negl}{\mathsf{negl}}
\newcommand{\poly}{\mathsf{poly}}
\newcommand{\polylog}{\mathsf{polylog}}
\begin{document}

\title{Quantum pseudoresources imply cryptography}

\author{Alex B. Grilo}
\author{Álvaro Yángüez \thanks{alvaro.yanguez@lip6.fr}}
\affil{\textit{Sorbonne Université, CNRS, LIP6, 4 Place Jussieu, Paris F-75005, France}}

\date{}

\maketitle
\begin{abstract}
\setlength{\parindent}{15pt}

While one-way functions (OWFs) serve as the minimal assumption for computational cryptography in the classical setting, in quantum cryptography, we have even weaker cryptographic assumptions such as pseudo-random states, and EFI pairs, among others. Moreover, the minimal assumption for computational quantum cryptography remains an open question. Recently, it has been shown that pseudoentanglement is necessary for the existence of quantum cryptography (Goulão and Elkouss 2024), but no cryptographic construction has been built from it.

\indent In this work, we study the cryptographic usefulness of quantum pseudoresources —a pair of families of quantum states that exhibit a gap in their resource content yet remain computationally indistinguishable. We show that quantum pseudoresources imply a variant of EFI pairs, which we call EPFI pairs, and that these are equivalent to quantum commitments and thus EFI pairs. Our results suggest that, just as randomness is fundamental to classical cryptography, quantum resources may play a similarly crucial role in the quantum setting.

\indent Finally, we focus on the specific case of entanglement, analyzing different definitions of pseudoentanglement and their implications for constructing EPFI pairs. Moreover, we propose a new cryptographic functionality that is intrinsically dependent on entanglement as a resource.

\end{abstract}

\clearpage

\section{Introduction}

Many quantum resources are fundamental to achieving quantum advantage in information-processing tasks over conventional classical devices, e.g., entanglement \cite{PV06,HHHH09}, coherence \cite{BCP14,SAP17}, or "magic" \cite{VHGE14,MC17}. However, the manipulation of physical systems to operate with these resources is constrained in terms of time and space. Ignoring these computational limitations leads to an impractical characterization of such resources.

Inspired by complexity theory, a recent line of research studies these resources from a computational perspective. This phenomenon, known as \textit{pseudoresources}, characterizes states that do not possess a given resource yet "look like" resourceful states to a computationally bounded observer~\cite{HBE24,BMB+24}. Among the different resources, the concept of \textit{pseudoentanglement}~\cite{ABF+23,ABV23,GE24,LREJ25} stands out as a key example, where entanglement is the ``hidden'' resource of interest. More concretely, it describes the case where two families of states exhibit a large gap in the amount of their entanglement yet remain indistinguishable to a polynomial-time quantum adversary.

In classical cryptography, the resource of randomness plays a crucial role. Moreover, its computational variant, \textit{pseudorandomness}, is at the core of symmetric cryptographic constructions. This naturally raises the question of the role quantum resources play in quantum cryptography. How can pseudoresource states be constructed from other cryptographic primitives? Which cryptographic functionalities can be implemented given the existence of pseudoresources?

In the specific case of pseudoentanglement, various constructions have been proposed, ranging from those based on One-Way Functions (OWFs)\footnote{In short, OWFs are functions that are easy to compute but hard to invert.} for pure states \cite{ABF+23,ABV23,LREJ25} to EFI pairs\footnote{An EFI pair consists of two efficiently generated quantum states which are far in trace distance, but which are indistinguishable by computationally bounded adversaries. See \Cref{def:EFI} for a formal definition} for mixed states \cite{GE24}. In particular, this latter construction establishes pseudoentanglement as a minimal assumption for the existence of most computationally based quantum cryptographic protocols. However, to the best of our knowledge, no cryptographic primitive has yet been constructed directly from pseudoentangled states.

Beyond entanglement, other quantum resources have also been explored from a computational complexity perspective. For instance, recent work has examined magic states \cite{GLG+24} and coherence \cite{HBE24} in the context of complexity theory. Moreover, based on \textit{pseudorandom density matrices} (PRDMs), these resources have also been studied in the case of mixed states \cite{BMB+24}. PRDMs, which represent density matrices that are computationally indistinguishable from Haar random ones, and pseudomagic pure states have both been shown to imply EFI pairs \cite{BMB+24,GLG+24}.\footnote{Actually, pseudomagic implies EPFI pairs, that we define in this work.} 

In this work, we take a step further and demonstrate that quantum pseudoresources can be leveraged to construct useful cryptographic primitives. To do so, we introduce an extension of EFI pairs, which we call \textit{EPFI pairs}, and show that they imply quantum commitment in a way similar to how EFI pairs do. More importantly, we present a general method for constructing EPFI pairs from quantum pseudoresources. As a corollary, this establishes that quantum pseudoresources can be used to construct several cryptographic primitives, including commitments, oblivious transfer, and secure multiparty computation.

\subsection{Our contribution}

We describe now our contributions in more details.

\subsubsection{Definition of EPFI pairs and constructing commitment schemes.}

The cryptographic primitive of EFI pairs (Efficiently generated, statistically far and indistinguishable states) was first defined in \cite{BCQ23}, and it consists of a pair of states $\rho$ and $\sigma$ that are far in trace distance but cannot be efficiently distinguished by polynomial-time algorithms.  In \cite{BCQ23}, they showed that EFI pairs are equivalent to quantum commitments, more specifically, to the statistically binding variant of canonical quantum commitments introduced in \cite{Yan22}. Consequently, EFI pairs serve as a fundamental assumption for the existence of commitments, oblivious transfer, multiparty computation, and computational zero-knowledge proofs for non-trivial languages.

In order to achieve our results, we need to slightly modify such a definition as follows. We consider two keyed families of states $\{\rho_{k}\}_k$ and $\{\sigma_{k'}\}_{k'}$, and we require that for every $k$ and $k'$, $\rho_{k}$ is far from $\sigma_{k'}$ in trace distance, but are still indistinguishable by efficient algorithms. We call such pair of ensembles as EPFI pairs for Efficiently generated, pairwise statistically far and computationally indistinguishable families. We notice that an EFI pair can be seen as an EPFI pair in which each ensemble contains one element. However, EPFI pairs do not trivially imply EFI pairs: we can have families of states that are pairwise far but whose mixture is close.

In \Cref{sect:EPFI-QC}, we demonstrate that the existence of EPFI pairs of ensembles implies the existence of quantum commitments. Informally, a commitment scheme is a two-party cryptographic primitive in which a committer commits to a bit that remains hidden from the receiver until the committer chooses to reveal it. The scheme must satisfy two properties: \textit{binding}, meaning the committer cannot change the committed value, and \textit{hiding}, meaning the receiver cannot learn the value before the reveal phase.

We focus on the canonical quantum commitments introduced in \cite{Yan22}, where it is shown that proving either binding or hiding in the \textit{semi-honest} setting—where both parties follow the protocol during the commitment phase—is sufficient. This model simplifies analysis by restricting adversarial behavior to the reveal phase and was proven in \cite{Yan22} to be equivalent to stronger notions of binding, such as sum-binding \cite{Unr16}. For completeness, we briefly explain this construction.

As with EFI pairs, EPFI pairs naturally lead to the construction of canonical quantum commitments. 
\begin{itemize}
\item \textbf{Commit stage:} The committer commits to a bit $b$ by generating a bipartite state $\ket{\psi^b}_{CR}$ and sending the register $C$ to the receiver.
\item \textbf{Reveal stage:} The committer discloses register $R$ along with the bit $b$, allowing the receiver to verify the commitment by projecting onto $\ket{\psi^b}_{CR}$. If the verification fails, the receiver aborts.
\end{itemize}

In our construction, we introduce a classical key to accommodate the use of state ensembles. This modification does not alter the protocol since the key is also revealed in the opening stage. Our approach closely follows the honest statistical binding and computational hiding canonical quantum commitments construction from EFI pairs of \cite{BCQ23}. Here, the committed register $C$ contains a state sampled from one of the EPFI pairs. In the reveal stage, the committer provides the receiver with the purification of the sampled state, the secret key associated with the state, and the committed bit.

Honest binding follows from the statistical distance between any state sampled from one family of the EPFI pair and all states from the opposing ensemble together with Uhlmann's theorem \cite{Uhl76}. Similarly, computational hiding follows from the computational indistinguishability of EPFI pairs. Thus, just as EFI pairs yield a statistically binding canonical quantum commitment when restricted to two algorithms, EPFI pairs provide the same commitment structure when generalized to two families of algorithms for two possible committed values. This new functionality, EPFI pairs, it is going to play a central role in the construction of cryptography from different quantum pseudoresources, as sketched in \Cref{fig:primitives}.

\subsubsection{EPFI from quantum pseudoresources.} 
We explore the role of pseudoresourced states in quantum cryptography and demonstrate how they can lead to the construction of EPFI pairs and thus, cryptography, as represented in \Cref{fig:primitives}. In quantum information theory, a resource refers to any intrinsic property of a quantum system—such as entanglement, magic, or coherence—that provides an advantage for information processing tasks. Each resource is characterized by a corresponding set of free (resourceless) states and the free operations that cannot generate the resource. Various measures quantify these resources by evaluating, informally, how “far” a state is from the set of free states. One particularly relevant measure, and the one we adopt in this work, is the \textit{relative entropy of resource}.

\begin{figure}[t!]
    \centering
    \includegraphics[width=1.1\textwidth, left]{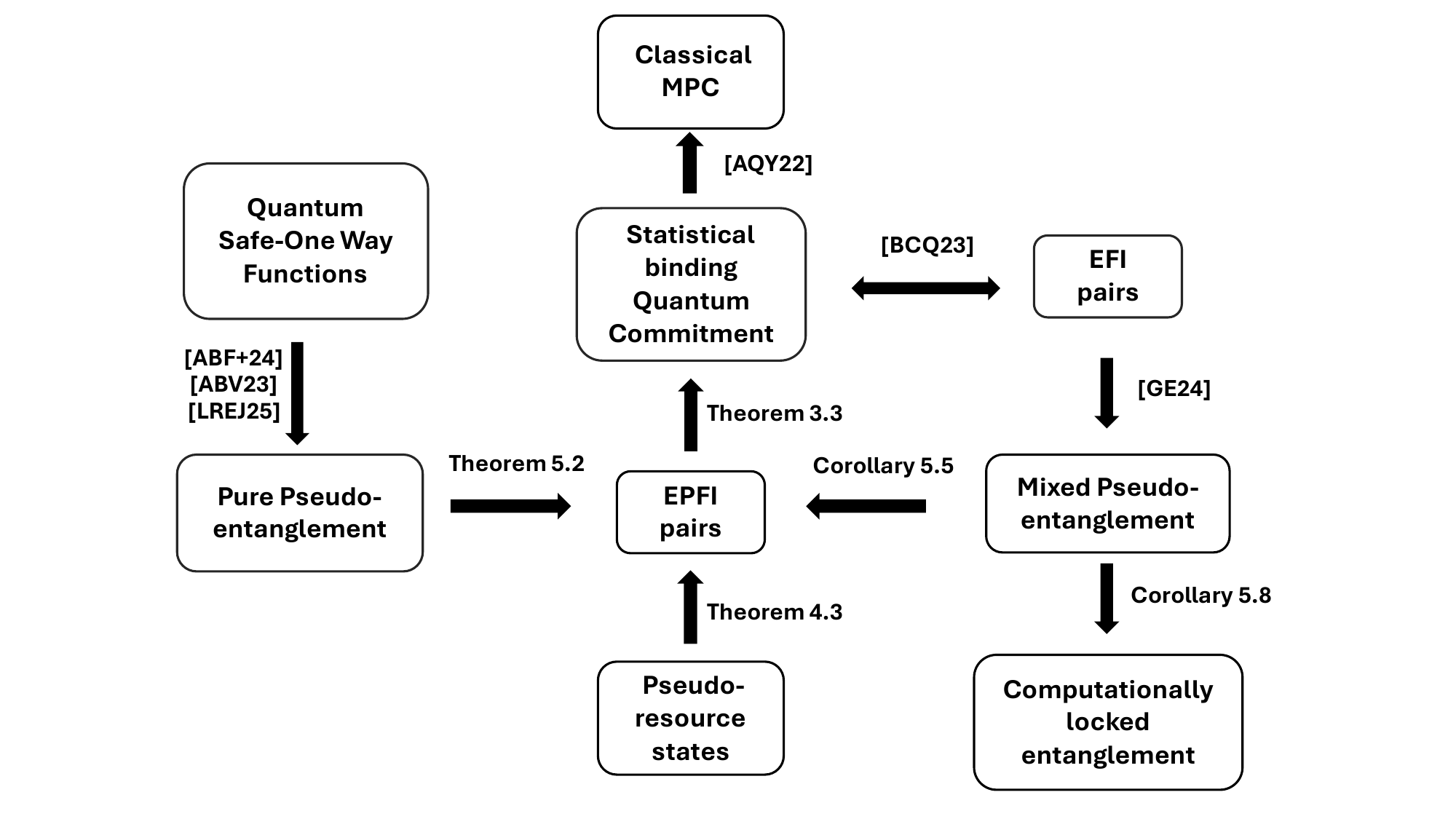}
    \caption{Summary of results and its relation with previous defined primitives.}
    \label{fig:primitives}
\end{figure}

\begin{informal}[$\eta$-gapped pseudoresource]
   A pair of efficiently generated families of quantum states is said to have an $\eta$-gapped pseudoresource if there is at least an $\eta$-gap in the relative entropy of the resource of the states sampled from each one of the families, but these families are computationally indistinguishable.
\end{informal}

Therefore, under computational restrictions, the inherent resource properties of quantum states can be effectively concealed. This phenomenon has been extensively studied in the contexts of entanglement \cite{ABF+23,ABV23,GE24,LREJ25}, magic \cite{GLG+24}, and coherence \cite{HBE24}. More generally, works such as \cite{HBE24,BMB+24} have characterized pseudoresources for any resource monotone. In our approach, we focus on the relative entropy of resource—though the definition can be extended to any asymptotically continuous monotone function —to formalize the notion of a pseudoresource.

We construct EPFI pairs by assuming the existence of a pair of families of states with an $\eta$-gapped pseudoresource. The construction follows naturally: each family in the pseudoresourced pair directly corresponds to a family in the EPFI pair. The efficient generation and computational indistinguishability properties of EPFI pairs are immediate from the definition of an $\eta$-gapped pseudoresource. However, establishing statistical distance requires further analysis. Our approach relies on an inequality from \cite{Win16} that relates the relative entropy of resources to trace distance. Informally, statistical farness follows from the asymptotic continuity of the resource measure: when two states exhibit a significant gap in their resource value (in this case, relative entropy of resource), they must also exhibit a large gap in trace distance.

\subsubsection{Pseudoentanglement} 
Entanglement is the central resource in quantum information theory. In the study of resourcefulness under computational constraints, \textit{pseudoentanglement} has received the most attention, leading to various constructions and definitions.

The first formalization of pseudoentanglement, introduced in \cite{ABF+23}, is restricted to families of pure states. This limitation arises not only from the construction itself but also from the choice of entanglement measure, as the entropy of entanglement lacks a well-defined operational meaning for mixed states. While later works, such as \cite{ABV23,LREJ25}, consider more general entanglement measures, their constructions are still pseudoentangled families of pure states.

\begin{informal}[Pure $\eta$-gap pseudoentanglement]
A pair of efficiently generated families of quantum pure states is said to have $\eta$-gap pseudoentanglement if there is at least an $\eta$-gap in the entanglement entropy of the states sampled from each one of the families, while the ensembles are computationally indistinguishable. 
\end{informal}

The proposed definition aligns with that of \cite{ABF+23}, with a particular emphasis on the entanglement entropy gap between states sampled from each family. The construction of EPFI pairs from families exhibiting $\eta$-gapped pseudoentanglement follows a similar approach to the one used for pseudoresources. Here, the relevant states are the reduced density matrices obtained by tracing out the partition used to measure entanglement. Applying Fannes' inequality \cite{Fa73}, we establish the statistical distance of the EPFI pair, while computational indistinguishability and efficient generation properties follow directly from the definition of pseudoentanglement.

 Unlike pure pseudoentanglement, the entanglement in pseudoentangled mixed states cannot be quantified using the entanglement entropy. The necessity of an operational meaningful entanglement monotone motivated \cite{ABV23} the use of the two most used measures of entanglement: the entanglement cost, $E_C$, and the distillable entanglement, $E_D$. Moreover, they proposed the use of computationally meaningful counterparts of these measures, in which the generation or distillation of entanglement has to be implementable by a poly-time algorithm. The definition of pseudoentanglement with these entanglement monotones allowed \cite{GE24} to construct a pseudoentangled mixed state family from EFI pairs or, what it is the same, to prove mixed pseudoentanglement as a new minimal assumption for the existence of computational based cryptography, as sketched in \Cref{fig:primitives}. Moreover, the use of other entanglement monotones also allows the existence of a maximal gap of a gap in the pseudoentangled families of $\Theta(n)$ vs $0$ \footnote{In contrast to pure pseudoentanglement, where the maximal gap is given by $\Theta(n)$ vs $\omega(\log(n))$.}.

\begin{informal}[Mixed $\eta$-gap pseudoentanglement]
A pair of efficiently generated families of quantum mixed states is said to have $\eta$-gap pseudoentanglement if there is at least an $\eta$-gap in the relative entropy of entanglement of the states sampled from each one of the families, while the ensembles are computationally indistinguishable.
\end{informal}

While previous definitions of pseudoentanglement have been extended to mixed states \cite{ABV23,GE24}, our formulation introduces key distinctions. We explicitly require both families to be efficiently preparable and quantify entanglement using an information-theoretic measure. The construction of EPFI pairs from mixed $\eta$-gapped pseudoentanglement follows a similar approach to prior cases, with the main difference being the inequality used to relate the regularised relative entanglement entropy to trace distance, which in this case is derived from \cite{Win16}. Consequently, assuming a mixed $\eta$-gapped pseudoentanglement with $\eta \geq 2 +1/\poly(n)$, EPFI pairs exist.

At first glance, the use of an information-theoretic entanglement measure may seem restrictive. However, it actually represents a relaxation of previous definitions. Some families exhibit no gap in computational entanglement measures, yet EPFI pairs can still be constructed from them. Conversely, any pseudoentangled states with an $\eta \geq 2 +1/\poly(n)$ gap in their (asymptotic) computational entanglement as defined in \cite{ABV23} measures imply the existence of EPFI pairs.

\subsubsection{Computationally locked entanglement}

 The research of new quantum functionalities with no classical analogue has been recently motivated by the objective of finding the minimal assumption. Nevertheless, it is not clear which applications some of these new functionalities can have. Moreover, in the case of pseudoentanglement, it was proven and the minimal assumption \cite{GE24} but no cryptographic functionality was proposed. After the construction of canonical quantum commitments and EFI pairs from pseudoentangled states, the main question that arises is the existence of a functionality that is inherently constructed from the existence of pseudoentanglement. 

Informally, the \textit{computationally locked entanglement} functionality is given by an efficiently generated family of states $\{\psi_{AB}^{k(\lambda)}\}_{k}$ which has high distillable entanglement, i.e, $\hat{E}^{\epsilon}_D (\{k, \psi_{AB}^{k}\}) \geq d$. This entanglement is efficiently distillable given the classical (or quantum) key $k$. Nevertheless, without having access to the key, the family is computationally indistinguishable from a low entangled family $\{\psi_{AB}^{k(\lambda)}\}_{k}$, i.e., $\hat{E}^{\epsilon}_C (\{ \psi_{AB}^{k(\lambda)}\}_{k}) \leq c$, where $c < d$. This construction can be seen as a dual case of the previously defined pseudoentanglement \cite{GE24,ABF+23,ABV23}: in this case, the "highly entangled" family is the efficiently generated one while the "low-entangled" family does not have to. This functionality is also relaxation of our proposed definition for pseudoentanglement, in which both families have to be efficiently generated. Possible applications of such a functionality can be authenticated quantum teleportation or certified routing quantum networks. {We leave as an open question if such a functionality can be built from weaker computational assumptions

\subsection{Open questions}

The study of quantum resources from a limited computational perspective is a promising area of research. In the case of entanglement, \cite{LREJ25} has recently proven that in the case of pure states, the manipulation of entanglement given polynomially bounded operations diverges significantly from the unbounded framework. However, for general computational resource theories there is currently no accepted measure that captures the “amount” of a resource when one is restricted to efficient (e.g. polynomial-time) operations. A promising direction is to develop a notion of quantum relative entropy that is defined relative to a class of computationally limited operations.

Another fundamental question is the relationship between EPFI and pseudoresources with complexity classes. According to \cite{Kre21}, $\mathsf{PP} \neq \mathsf{BQP}$ is necessary for the existence of PRS; however, such a condition is not known for EFI pairs—and therefore our proposed primitives might be even weaker. Understanding whether the complexity condition differs from that of PRS would provide insights into a possible separation of MiniQCrypt into two distinct worlds.

The relation between pseudomagic and pseudoentanglement has been studied for pure states \cite{GOL24}; however, it remains unexplored for mixed states. While the construction in \cite{BMB+24} simultaneously exhibits both pseudoentanglement and pseudomagic, it is unclear whether they are independent in general. Moreover, demonstrating the possibility of constructing mixed pseudoentangled states without magic would be an intriguing result, suggesting that computational indistinguishability is independent of the resource of magic for mixed states.

Lastly, our results closely depend on Fannes-type inequalities, and proving tighter versions of them for the different resources would directly improve the security of our cryptographic constructions.

\section{Preliminaries}
\subsection{Notation}
 Quantum states are represented as density matrices $\rho \in \BB_1(\HH)$. The set of states is defined as $\Ss(\HH) = \{ \rho \in \BB_1(\HH) \, | \, \rho \geq 0, \tr\rho = 1 \}$.A bipartite entangled state is defined as a state that is not separable, i.e., it cannot be written as $\rho_{AB} = \sum_i p(i) \rho_A^i \otimes \rho_B^i.$ A maximally bipartite entangled state is given by $\ket{\Psi} = \sum_{i=0}^{d-1}\ket{ii}/\sqrt{d} \in \HH_A \otimes \HH_B$. In the case of the space of dimension $2$, a maximally entangled state is known as a Bell state, and it is of the form $\ket{\Phi}= (\ket{00} + \ket{11} )/\sqrt{2}\in \HH_A \otimes \HH_B$, with $\HH_A = \HH_B = (\CC^{2})^{\otimes n}$. We denote by $\UU(\HH)$ the set of unitary operators.
The fidelity of two states $\rho$ and $\sigma$ is given by $F(\rho,\sigma) =[ \Tr ( \sqrt{\sqrt{\rho}\sigma\sqrt{\rho}})]^2$.

 The probability of distinguishing two density matrices is upper-bounded by $\frac{1}{2}(1+\Delta(\rho,\sigma))$, where $\Delta(\rho,\sigma)$ is the trace distance. The states $\rho$ and $\sigma$ are said to be statistically close when a negligible function $\mu(\lambda)$ exists such that $\Delta(\rho,\sigma) \leq \mu(\lambda)$. Given a \textit{security parameter} $\lambda$, $\mu(\lambda)$ is $\negl(\lambda)$,i.e., negligible, if, for every fixed $c$, $\mu(\lambda) = o(1/\lambda^c)$.

In the Landau notation, given two functions $f(n)$ and $g(n)$, we write $f(n)=o(g(n))$ if \newline $\lim_{n\rightarrow \infty} f(n)/g(n) = 0$. In the same way, $f(n)=\omega(g(n))$ if $\lim_{n\rightarrow \infty} f(n)/g(n) = \infty$. $f(n)=O(g(n))$ if there exist a constant $ C>0$ such that $\lim_{n\rightarrow \infty} f(n)/g(n) \leq C$. Similarly, $f(n)=\Omega(g(n))$ if there exist a constant $ C>0$ such that $\lim_{n\rightarrow \infty} f(n)/g(n) \geq C$. Lastly, $f(n)=\Theta(g(n))$ if both $f(n)=O(g(n))$ and $f(n)=\Omega(g(n))$.

\subsection{LOCC maps}

In the study of entanglement theory, one of the main objectives is the characterization of the entanglement. In order to understand how much entanglement a quantum state has, the first step is to manipulate the quantum state. Quantum channels are completely positive and trace preserving maps that transform quantum states into quantum states. In the study of entanglement, the relevant quantum channels are the ones that does not increase (or create) entanglement. A natural class that arises from a practical perspective is the one in which the parties are locally separated, and they are only allowed to perform classical communication.

\begin{definition} [LOCC map \cite{ABV23}] A quantum channel is said to be an LOCC map 

\begin{equation*}
    \Gamma : \HH_A \otimes\HH_B \mapsto \HH_{\bar{A}} \otimes \HH_{\bar{B}}
\end{equation*}

if it can be implemented by a two-party interactive protocol where each party can implement arbitrary local quantum computations and the two parties can exchange arbitrary classical communication. 
\end{definition}

\begin{definition} [Circuit description of an LOCC map \cite{ABV23}]
Given an LOCC map $\Gamma$, its circuit description is given by two families of circuits $\{\CI_{A,i}\}_{i \in \{1,..., r\}}$ and $\{\CI_{B,i}\}_{i \in \{1,..., r\}}$, each of them acting on $n_A +  t_A + c$ and $n_B +  t_B + c$ qubits respectively, such that the following procedure implements the map $\Gamma$ on an arbitrary input $\varphi_{AB} \in (\CC^2)^{\otimes n_A} \otimes (\CC^2)^{\otimes n_B}$ : 

\begin{enumerate}
    \item Registers $A$ and $B$ of $n_A $ and $n_B$ qubits are initialized in the state $\varphi_{AB}$. Ancilla registers $A'$ and $B'$ of $t_A$ and $t_B$ qubits are initialized in the $\ket{0}$ state. Communication register $C$ is also initialized in the $\ket{0}$ state.

    \item For $i = 1,..., r$, the circuit $\CI_{A,i}$ is applied to registers $A$, $A'$ and $C$. Then, register $C$ is measured in the computational basis. Thirdly, the circuit $\CI_{B,i}$ is applied to registers $B$, $B'$ and $C$. Lastly, register $C$ is measured in the computational basis.

    \item The final output of the LOCC map in the subspace $\HH_{\bar{A}}$ correspond to the state in the registers $A$ and $A'$. In the same way, the state in the registers $B$ and $B'$ is the final output on $\HH_{\bar{B}}$.
\end{enumerate}

A family of LOCC maps $\{\hat{\Gamma}_{\lambda}\}_{\lambda \in \NN}$ is said to be efficient if there exists a polynomial $c$ such that for all $\lambda$, $\hat{\Gamma}_{\lambda}$ has a circuit description whose total number of gates, including the ancilla creation and qubit measurements, is at most $c(\lambda)$.
    
\end{definition}

\subsection{Entanglement measures}

We will now introduce some functions that allow the quantification of entanglement. The most fundamental measure of entanglement in the case of pure states is given by the entanglement entropy.

\begin{definition} [Entanglement entropy]
Given a bipartite state $\rho_{AB} = \ket{\psi}\bra{\psi}_{AB} \in \HH_A \otimes \HH_B$, its \textit{entanglement entropy} is defined as 
\begin{equation*}
    E_{A/B}(\rho_{AB}) := S(\rho_A) = S(\rho_B) \,,
\end{equation*}
where $\rho_A = \Tr_{B}(\rho_{AB})$, $\rho_B = \Tr_{A}(\rho_{AB})$ and $S(\rho) = -\Tr(\rho \log \rho) $ is the von Neumann entropy. 
\label{def:entent}
\end{definition}

Nevertheless, in the case of mixed states, the entanglement entropy is not operationally meaningful. This problem leads to several measures of entanglement \cite{PV06}. Let us introduce two of the most relevant ones.

\begin{definition}[One-shot entanglement cost \cite{ABV23}]
Let $\epsilon \in [0,1]$, $\Gamma$ be an LOCC map. The one-shot entanglement  cost of a bipartite state $\rho_{AB} \in \HH_A \otimes \HH_B$ is given by
\begin{equation*}
    E^{\epsilon}_C(\rho_{AB}) = \inf_{n, \Gamma} \left\{ n | 1 - \text{F} ( \rho_{AB} , \Gamma(\Phi^{\otimes n})) \leq \epsilon \right \},
\end{equation*}
where $F(\rho,\sigma)$ is the fidelity, $\Phi =\ket{\Phi}\bra{\Phi}$ and $\ket{\Phi} \in \HH_A \otimes \HH_B $ is a Bell pair.
\end{definition}

\begin{definition} [One-shot distillable entanglement \cite{ABV23}]
Let $\epsilon \in [0,1]$, $\Gamma$ be an LOCC map. The one-shot distillable entanglement of a bipartite state $\rho_{AB} \in \HH_A \otimes \HH_B$ is given by
\begin{equation*}
    E^{\epsilon}_D(\rho_{AB}) = \sup_{m, \Gamma} \left\{ m | 1 - \text{F} (\Gamma (\rho_{AB}) , \Phi^{\otimes m}) \leq \epsilon \right \},
\end{equation*}
where $F(\rho,\sigma)$ is the fidelity, $\Phi =\ket{\Phi}\bra{\Phi}$ and $\ket{\Phi} \in \HH_A \otimes \HH_B $ is a Bell pair.

\end{definition} 

Let us now define the asymptotic versions of the measures of entanglement, which describe the rate at which the entanglement can be extracted (or diluted).

\begin{definition}[Asymptotic IID distillable entanglement]
 The asymptotic IID distillable entanglement of a bipartite state $\rho \in \HH_A \otimes \HH_B$ is given by 

 \begin{equation*}
     E^{\infty}_D (\rho) = \inf_{\epsilon \in (0,1]} \lim_{t \rightarrow \infty} \inf \frac{1}{t} E^{\epsilon}_D(\rho_{AB}^{ \otimes t}) \,.
 \end{equation*}
 \end{definition}
 
\begin{definition}[Asymptotic IID entanglement cost]
 The asymptotic IID entanglement cost of a bipartite state $\rho \in \HH_A \otimes \HH_B$ is given by 

 \begin{equation*}
     E^{\infty}_C (\rho) = \inf_{\epsilon \in (0,1]} \lim_{t \rightarrow \infty} \sup \frac{1}{t} E^{\epsilon}_C (\rho_{AB}^{ \otimes t}) \,.
 \end{equation*}
 \end{definition}

Previous measures of entanglement are specially relevant since any entanglement measure $E(\rho_{AB})$ has to fulfill that $E^{\infty}_D (\rho_{AB}) \leq E (\rho_{AB})  \leq E^{\infty}_C (\rho_{AB})$, i.e., they are extremal \cite{DHR02}. One of the entanglement entropies that will be used in this paper is the regularised relative entropy of entanglement.

\begin{definition}[Regularised relative entropy of entanglement]
Given a bipartite state $\rho_{AB} \in \HH_A \otimes \HH_B $, its regularised entropy of entanglement is given by,
\begin{align*}
    E^{\infty}_R(\rho_{AB}) = \lim_{n \rightarrow \infty}\frac{1}{n}E_R(\rho_{AB}^{\otimes n}),
\end{align*}
where $E_R(\rho) := \min_{\sigma_{AB \in \Ss_{A:B} }} D(\rho_{AB} || \sigma_{AB}) $ is the relative entropy of entanglement and $D(\rho || \sigma) = \Tr [\rho (\log \rho -\log \sigma)]$ is the relative entropy.
\end{definition}

\subsection{Computational entanglement measures}

We have previously defined the information theoretic measures of entanglement. Nevertheless, we can restrict to the case in which the operations have to be efficiently implementable, as proposed by \cite{ABV23}. The following definitions are meaningfully defined over families of states and in the asymptotic limit.

\begin{definition}[Computational one-shot entanglement cost \cite{ABV23}]
Let $\epsilon : \NN_+ \rightarrow [0,1]$ and $\lambda \in \NN_+$. Fix polynomial functions $n_A, n_B: \NN_+ \rightarrow \NN_+$. Let $\{\rho_{AB}^{\lambda}\}_{\lambda} $ be a family of quantum states such that, for any $\lambda \geq 1$, $\rho_{AB}^{\lambda} \in \HH_A \otimes \HH_B$ is a bipartite state on $n_A(\lambda) + n_B(\lambda)$. The function $c : \NN \rightarrow \NN$ is a upper bound on the computational entanglement cost of the family $\{\rho_{AB}^{\lambda}\}_{\lambda} $, i.e. $\hat{E}_C^{\epsilon} (\{\rho_{AB}^{\lambda}\}_{\lambda}) \leq c$, if there exists an efficient LOCC map family $\{\hat{\Gamma}^{\lambda}\}_{\lambda}$ such that, for each $\lambda \geq 1$, $\hat{\Gamma}^{\lambda}$ takes an input $c(\lambda)$ EPR pairs, and
\begin{equation*}
     1 - \text{F} ( \rho^{\lambda}_{AB} , \hat{\Gamma}^{\lambda}(\Phi^{\otimes c })) \leq \epsilon ({\lambda})  , \; \; \forall k \in \NN_+ \, .
\end{equation*} 
\end{definition}

\begin{definition} [Computational one-shot distillable entanglement \cite{ABV23}] \label{def:entanglement-distillable-wo-key}
Let $\epsilon : \NN_+ \rightarrow [0,1]$ and $\lambda \in \NN_+$. Fix polynomial functions $n_A, n_B: \NN_+ \rightarrow \NN_+$. Let $\{\rho_{AB}^{\lambda}\}_{\lambda} $ be a family of quantum states such that, for any $\lambda \geq 1$, $\rho_{AB}^{\lambda} \in \HH_A \otimes \HH_B$ is a bipartite state on $n_A(\lambda) + n_B(\lambda)$. The function $d : \NN \rightarrow \NN$ is a lower bound on the computational distillable entanglement of the family $\{\rho_{AB}^{\lambda}\}_{\lambda} $, i.e. $\hat{E}_D^{\epsilon} (\{\rho_{AB}^{\lambda}\}_{\lambda}) \geq d$, if there exists an efficient LOCC map family $\{\hat{\Gamma}^{\lambda}\}_{\lambda}$ such that, for each $\lambda \geq 1$, $\hat{\Gamma}^{\lambda}$ outputs a $2d(\lambda)-$qubit state, and 
\begin{equation*}
   1 - \text{F} (  \hat{\Gamma}^{\lambda}(\rho^{\lambda}_{AB}) ,\Phi^{\otimes d}) \leq \epsilon ({\lambda})  , \; \; \forall k \in \NN_+ \,.
\end{equation*}
   
\end{definition}

We can adapt this definition to the case where, for each parameter $\lambda \in \NN$, the different possible states of size $\lambda$ are indexed by a classical key $k \in \{0,1\}^{\kappa(\lambda)}$, with $\kappa: \NN_+ \rightarrow \NN_+$. Therefore, from now on, we will refer to the family of states as $\{\rho_{AB}^k\}_{k}$. Please, note that the security parameter $\lambda$ is implicit in the size of the keys $k$. In the same way, the action of the LOCC map for the distillation of entanglement has to have access to the key $k$. Therefore, the distillation map is now given by $\Gamma( k, \rho^k)$ which input is the state $\ketbra{k}{k}_{A'} \otimes \rho^{k}_{AB}\otimes\ketbra{k}{k}_{B'}$, where the bipartition is $A'A:B'B$. This map has to efficiently distill the entanglement from states associated with all possible keys. The same applies to the cost of generating all the states associated with all possible keys. The corresponding keys of the states can be extended to the case of quantum keys (more specifically, EFI pairs), as proven by \cite{GE24}.

\begin{definition}[Uniform computational one-shot distillable entanglement \cite{ABV23}]
Let $\epsilon : \NN_+ \rightarrow [0,1]$ and $\lambda \in \NN_+$. Fix polynomial functions $n_A, n_B: \NN_+ \rightarrow \NN_+$. Let $\{\rho_{AB}^{k}\}_{k \in \{0,1\}^{\kappa(\lambda)}}$ be a family of quantum states such that, for any $\lambda \geq 1$, $\rho_{AB}^{k} \in \HH_A \otimes \HH_B$ is a bipartite state on $n_A(\lambda) + n_B(\lambda)$. The function $d : \NN_+ \rightarrow \NN_+$ is a lower bound on the computational distillable entanglement of the family $\{k, \rho_{AB}^{k}\} $, i.e. $\hat{E}_D^{\epsilon} (\{k ,\rho_{AB}^{k}\}) \geq d$ , if there exists an efficient LOCC map family $\{\hat{\Gamma}^{\lambda}\}_{\lambda}$ such that, for each $\lambda \geq 1$, $\hat{\Gamma}^{\lambda}$ outputs a $2d(\lambda)-$qubit state, and 
\begin{equation*}
     1 - \text{F} (  \hat{\Gamma}^{\lambda}(k,\rho^{k}_{AB}) ,\Phi^{\otimes d }) \leq \epsilon ({\lambda}) , \; \; \forall \lambda \in \NN_+, \; \;\forall k \in \{0,1\}^{\kappa(\lambda)}.
\end{equation*}

\label{def:cmpdistent}    
\end{definition}

\begin{definition}[Uniform computational one-shot entanglement cost\cite{ABV23}]
Let $\epsilon : \NN_+ \rightarrow [0,1]$ and $\lambda \in \NN_+$. Fix polynomial functions $n_A, n_B: \NN_+ \rightarrow \NN_+$. Let $\{\rho_{AB}^{k}\}_{k } $ be a family of quantum states such that, for any $\lambda \geq 1$ and $k \in \{0,1\}^{\kappa (\lambda)}$, $\rho_{AB}^{k} \in \HH_A \otimes \HH_B$ is a bipartite state on $n_A(\lambda) + n_B(\lambda)$. The function $c : \NN \rightarrow \NN$ is a upper bound on the computational entanglement cost of the family $\{k, \rho_{AB}^{k}\} $, i.e., $\hat{E}_C^{\epsilon} (\{k ,\rho_{AB}^{k}\}) \leq c$, if there exists an efficient LOCC map family $\{\hat{\Gamma}^{\lambda}\}_{\lambda}$ such that, for each $\lambda \geq 1$, $\hat{\Gamma}^{\lambda}$ takes an input  $c(\lambda)$ EPR pairs, and
\begin{equation*}
     1 - \text{F} (  \rho^{k}_{AB} , \hat{\Gamma}^{\lambda}(k,\Phi^{\otimes c })) \leq \epsilon ({\lambda}) , \; \; \forall \lambda \in \NN_+, \; \; \forall k \in \{0,1\}^{\kappa(\lambda)} .
\end{equation*} 
\label{def:comptentcost}    
\end{definition}

\subsection{Auxiliary lemmas}
Let us now introduce the quantum information theory tools that we make use of.

\begin{theorem} [Holevo-Helstrom \cite{Hol73,Hel69}]\label{th:HolHel}
Given two mixed states $\rho$ and $\sigma$, the best success probability to distinguish them is given by $\frac{1}{2}(1 + \Delta(\rho, \sigma))$, where $\Delta(\rho, \sigma) = \frac{1}{2} \norm{\rho-\sigma}_1$. Moreover, given $n$-copies,
\begin{equation}
    \Delta(\rho^{\otimes n}, \sigma^{\otimes n}) \geq 1- \exp(-n\Delta(\rho, \sigma)/2) \,.
\end{equation}
\end{theorem}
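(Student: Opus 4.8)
The plan is to handle the two claims separately: the single-copy optimal distinguishing probability, and the amplification to $n$ copies. For the first claim I would reduce optimal distinguishing to an optimization over two-outcome POVMs. With equal priors, any measurement $\{M,\id-M\}$ with $0\le M\le\id$ that guesses ``$\rho$'' on outcome $M$ succeeds with probability
\begin{equation*}
p_{\mathrm{succ}}(M)=\tfrac12\Tr(M\rho)+\tfrac12\Tr\big((\id-M)\sigma\big)=\tfrac12+\tfrac12\Tr\big(M(\rho-\sigma)\big).
\end{equation*}
Diagonalizing the Hermitian traceless operator $\rho-\sigma=\sum_i\lambda_i\ketbra{i}{i}$ and setting $P_+=\sum_{\lambda_i>0}\ketbra{i}{i}$, I would observe that $\Tr(M(\rho-\sigma))\le\Tr(P_+(\rho-\sigma))=\sum_{\lambda_i>0}\lambda_i$ for every admissible $M$, so $M=P_+$ is optimal. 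Tracelessness gives $\sum_{\lambda_i>0}\lambda_i=\tfrac12\sum_i|\lambda_i|=\tfrac12\norm{\rho-\sigma}_1=\Delta(\rho,\sigma)$, hence $p_{\mathrm{succ}}=\tfrac12(1+\Delta(\rho,\sigma))$, which proves the first claim.

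For the $n$-copy bound the plan is to route through the fidelity. Using the paper's (squared) fidelity $F$, I would invoke its multiplicativity on tensor powers, $F(\rho^{\otimes n},\sigma^{\otimes n})=F(\rho,\sigma)^n$, together with the Fuchs--van de Graaf lower bound $\Delta(\rho',\sigma')\ge 1-\sqrt{F(\rho',\sigma')}$. Applied to $\rho'=\rho^{\otimes n}$ and $\sigma'=\sigma^{\otimes n}$ this yields
\begin{equation*}
\Delta(\rho^{\otimes n},\sigma^{\otimes n})\ge 1-\sqrt{F(\rho^{\otimes n},\sigma^{\otimes n})}=1-\big(\sqrt{F(\rho,\sigma)}\big)^{n}.
\end{equation*}
To land on the stated form it then suffices to establish the pointwise estimate $\sqrt{F(\rho,\sigma)}\le\exp(-\Delta(\rho,\sigma)/2)$, since raising it to the $n$-th power gives $\big(\sqrt{F}\big)^{n}\le\exp(-n\Delta/2)$ and hence the claimed lower bound.

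The main obstacle is precisely this last estimate, and I expect it to be the crux of the argument. The routine tool is the companion inequality $\Delta\le\sqrt{1-F}$, which rearranges to $\sqrt{F}\le\sqrt{1-\Delta^2}\le\exp(-\Delta^2/2)$ (via $1-x\le e^{-x}$); this delivers the exponent $\Delta^2$, i.e. $\Delta(\rho^{\otimes n},\sigma^{\otimes n})\ge 1-\exp(-n\Delta^2/2)$, rather than the stated $\Delta$. Closing this gap requires the strictly stronger root-fidelity bound $\sqrt{F}\le e^{-\Delta/2}$, and before committing to it I would test it on near-balanced commuting pairs such as $\rho=\mathrm{diag}(p,1-p)$, $\sigma=\mathrm{diag}(1-p,p)$ with $p$ close to $1/2$: there $\sqrt{F}=2\sqrt{p(1-p)}\approx 1-\tfrac12\Delta^2$, which exceeds $e^{-\Delta/2}\approx 1-\tfrac12\Delta$ for small $\Delta$, so the sharper estimate fails and the stated $\exp(-n\Delta/2)$ cannot hold in general. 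I would therefore prove the $\Delta^2$-exponent version rigorously (it also follows from a majority-vote plus Hoeffding argument on the optimal single-copy measurement) and flag the $\Delta$ in the exponent as the point to be corrected.
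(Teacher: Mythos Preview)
The paper does not prove this theorem at all: it is stated in the preliminaries as a cited auxiliary result and then invoked once, in the proof that EPFI pairs imply commitments, to amplify a $1/\poly(\lambda)$ single-copy trace distance to $1-\negl(\lambda)$ with polynomially many copies. So there is no ``paper's proof'' to compare against; I can only assess your proposal on its own merits.

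Your treatment of the first claim is the standard Helstrom argument and is correct.

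Your analysis of the second claim is the important part, and it is right: the inequality as printed, with exponent $n\Delta/2$, is false. Your diagonal counterexample is valid. For $\rho=\mathrm{diag}(p,1-p)$, $\sigma=\mathrm{diag}(1-p,p)$ one has exactly $F(\rho,\sigma)=1-\Delta^2$, hence $\Delta(\rho^{\otimes n},\sigma^{\otimes n})\le\sqrt{1-(1-\Delta^2)^n}$ by Fuchs--van de Graaf. Taking $\Delta$ small and $n=\lfloor 1/\Delta\rfloor$ gives an upper bound of order $\sqrt{\Delta}$, while the stated lower bound would be $1-e^{-1/2}\approx 0.393$; for $\Delta$ small enough these are incompatible. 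The bound you do establish, $\Delta(\rho^{\otimes n},\sigma^{\otimes n})\ge 1-\exp(-n\Delta^2/2)$ (via multiplicativity of fidelity together with $\Delta\le\sqrt{1-F}$ and $1-x\le e^{-x}$, or equivalently via majority vote and Hoeffding), is the correct statement.

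It is worth noting that this correction does not affect the paper's downstream use: with $\Delta\ge 1/p(\lambda)$ one simply takes $n=p(\lambda)^{3}$ copies to get $n\Delta^2\ge p(\lambda)$ and hence $\Delta(\rho^{\otimes n},\sigma^{\otimes n})\ge 1-\negl(\lambda)$, which is all that the commitment construction needs.
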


\begin{theorem}[Uhlmann's theorem \cite{Uhl76}]\label{the:Uhlman}
Let $\rho \in \BB_1(\HH_1)$ and $\sigma \in \BB_1(\HH_1)$ be a pair of density operators, where $\rho = \Tr_{2}(\ketbra{\psi}{\psi})$ for $\ket{\psi} \in \BB_1(\HH_1 \otimes \HH_2)$. It holds that $F(\rho, \sigma) = \max \big\{ |\langle \psi | \eta\rangle| :  \ket{\eta} \in \BB_1(\HH_1 \otimes \HH_2) \; \textit{is a pure state s.t. } \Tr_2 (\ketbra{\eta}{\eta}) = \sigma   \big\}$.
\end{theorem}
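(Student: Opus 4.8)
The plan is to reduce Uhlmann's identity to the variational formula for the trace norm, $\max_{U\in\UU(\HH_1)}\abs{\Tr(XU)} = \norm{X}_1$, by ``vectorizing'' operators. Fix an orthonormal basis $\{\ket{i}\}$ of $\HH_1$ and let $\ket{\Omega} = \sum_i \ket{i}\ket{i}$ be the unnormalized maximally entangled vector on $\HH_1\otimes\HH_1$, for which one has the two elementary identities $\bra{\Omega}\big(M\otimes\id\big)\ket{\Omega} = \Tr M$ and $\Tr_2\big[(M\otimes\id)\ketbra{\Omega}{\Omega}(N^\dagger\otimes\id)\big] = MN^\dagger$. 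I would first argue that one may assume $\dim\HH_2 = \dim\HH_1$: a purification in a larger $\HH_2$ can be rotated, by an isometry acting on the purifying system alone, into $\HH_1\otimes\HH_1'$ with $\dim\HH_1' = \dim\HH_1$, and applying that isometry simultaneously to $\ket{\psi}$ changes neither $\sigma$ nor $\abs{\langle\psi|\eta\rangle}$.

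With this normalization, the purifications of $\rho$ in $\HH_1\otimes\HH_1$ are exactly the vectors $\ket{\psi_U} := (\sqrt{\rho}\,U\otimes\id)\ket{\Omega}$, $U\in\UU(\HH_1)$ — the relation $\Tr_2\ketbra{\psi_U}{\psi_U} = \sqrt{\rho}\,UU^\dagger\sqrt{\rho} = \rho$ is immediate from the second identity, and the standard ``purifications are unique up to an isometry on the ancilla'' fact, together with a dimension count, shows these exhaust them; likewise every purification of $\sigma$ is $\ket{\eta_V} = (\sqrt{\sigma}\,V\otimes\id)\ket{\Omega}$. Writing the given $\ket{\psi}$ as $\ket{\psi_{U_0}}$ and expanding via the first identity,
\begin{equation*}
\langle\psi_{U_0}|\eta_V\rangle \;=\; \bra{\Omega}\big(U_0^\dagger\sqrt{\rho}\,\sqrt{\sigma}\,V\otimes\id\big)\ket{\Omega} \;=\; \Tr\!\big(U_0^\dagger\sqrt{\rho}\sqrt{\sigma}\,V\big),
\end{equation*}
so that, maximizing over the free unitary $V$ and using unitary invariance of the trace norm,
\begin{equation*}
\max_{\eta\text{ purifies }\sigma}\abs{\langle\psi|\eta\rangle} \;=\; \max_{V\in\UU(\HH_1)}\abs{\Tr\!\big(U_0^\dagger\sqrt{\rho}\sqrt{\sigma}\,V\big)} \;=\; \norm{U_0^\dagger\sqrt{\rho}\sqrt{\sigma}}_1 \;=\; \norm{\sqrt{\rho}\sqrt{\sigma}}_1 \;=\; \Tr\sqrt{\sqrt{\rho}\,\sigma\,\sqrt{\rho}}.
\end{equation*}
The last step uses $\norm{X}_1 = \Tr\sqrt{X^\dagger X}$ for $X = \sqrt{\rho}\sqrt{\sigma}$ together with the fact that $\sqrt{\sigma}\rho\sqrt{\sigma}$ and $\sqrt{\rho}\sigma\sqrt{\rho}$ have the same nonzero spectrum; the resulting quantity is $\sqrt{F(\rho,\sigma)}$, and the maximizing $V$ — the unitary appearing in the polar decomposition of $U_0^\dagger\sqrt{\rho}\sqrt{\sigma}$ — shows the optimum is attained, which is what is claimed.

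The two points needing a little care, and which I regard as the real content, are (i) the reduction to equal purifying dimension and the claim that the $\ket{\psi_U}$ are \emph{all} the purifications of $\rho$ — the standard ``freedom in purifications'' lemma — and (ii) the variational identity $\max_{U}\abs{\Tr(XU)} = \norm{X}_1$, which one proves from the polar decomposition $X = \abs{X}W$ by taking $U = W^\dagger$ for the lower bound and bounding $\abs{\Tr(\abs{X}WU)}\le\Tr\abs{X}$ in the eigenbasis of $\abs{X}$ for the upper bound. Neither is a genuine obstacle: once both are in hand the argument is essentially bookkeeping, so the only thing that needs ``getting right'' is choosing the vectorization so that the overlap $\langle\psi|\eta\rangle$ turns into a clean trace and the maximization collapses onto the trace norm.
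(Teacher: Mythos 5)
The paper does not prove this statement --- it is quoted from the literature as a standard tool --- so there is no internal proof to compare against; your argument is the standard one and is essentially correct. The vectorization identities, the parametrization of purifications as $(\sqrt{\rho}\,U\otimes\id)\ket{\Omega}$, the collapse of the overlap to $\Tr(U_0^\dagger\sqrt{\rho}\sqrt{\sigma}V)$, and the variational formula $\max_V\abs{\Tr(XV)}=\norm{X}_1$ via polar decomposition all check out, as does the passage from $\norm{\sqrt{\rho}\sqrt{\sigma}}_1=\Tr\sqrt{\sqrt{\sigma}\rho\sqrt{\sigma}}$ to $\Tr\sqrt{\sqrt{\rho}\sigma\sqrt{\rho}}$ by equality of nonzero spectra of $AB$ and $BA$. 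Two remarks. First, a convention clash you half-noticed but then waved away: your computation yields $\max_\eta\abs{\langle\psi|\eta\rangle}=\Tr\sqrt{\sqrt{\rho}\sigma\sqrt{\rho}}$, which under the paper's own definition $F(\rho,\sigma)=[\Tr\sqrt{\sqrt{\rho}\sigma\sqrt{\rho}}]^2$ equals $\sqrt{F(\rho,\sigma)}$, not $F(\rho,\sigma)$; so the theorem as printed is consistent only with the root-fidelity convention (or with $\abs{\langle\psi|\eta\rangle}^2$ on the right-hand side). You should say explicitly that what you prove is $\max_\eta\abs{\langle\psi|\eta\rangle}=\sqrt{F(\rho,\sigma)}$ and that the statement's $F$ must be read as root fidelity --- the discrepancy is the paper's, but ``which is what is claimed'' is not literally true under its stated definition. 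Second, your reduction to $\dim\HH_2=\dim\HH_1$ implicitly assumes $\dim\HH_2\geq\dim\HH_1$ (so that the rotation into $\HH_1\otimes\HH_1'$ is an isometry on the ancilla and the set of purifications of $\sigma$ is not artificially truncated); if $\HH_2$ were too small to carry a purification of $\sigma$ the maximum would be over an empty set. This is a degeneracy the paper's statement also glosses over, but a careful write-up should state the dimension hypothesis. Neither point affects the substance of your argument.
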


\begin{lemma} [Fannes inequality \cite{Fa73}]\label{lem:Fannes}
Given two density operators $\rho \in \BB_1(\HH)$ and $\sigma\in \BB_1(\HH)$, it holds that

\begin{equation}
    \abs{S(\rho)-S(\sigma)} \leq 2 \Delta(\rho,\sigma) \log d + c ( \Delta(\rho,\sigma)) \, ,
\end{equation}
    where $c(x) := \min \{-x\log x, 1/2e\}$ and $S(\rho) = -\Tr(\rho \log \rho) $ is the von Neumann entropy.
\end{lemma}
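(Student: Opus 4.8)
The plan is to reduce this operator inequality to a statement about two classical probability vectors and then exploit the concavity of $\eta(x):=-x\log x$. First I would diagonalise both states and list their eigenvalues in decreasing order as $r_1\ge\cdots\ge r_d$ and $s_1\ge\cdots\ge s_d$, so that $S(\rho)=\sum_i\eta(r_i)$ and $S(\sigma)=\sum_i\eta(s_i)$. The one place where the operator structure really enters is a majorisation estimate — Mirsky's inequality for the eigenvalues of Hermitian operators — which gives $\sum_i|r_i-s_i|\le\norm{\rho-\sigma}_1=2\Delta(\rho,\sigma)$; after this I can forget eigenvectors entirely. Set $\Delta:=\Delta(\rho,\sigma)$, $\delta_i:=|r_i-s_i|$ and $T:=\sum_i\delta_i\le 2\Delta$.

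If $\Delta\ge\tfrac12$ the inequality is immediate, since then $2\Delta\log d\ge\log d\ge|S(\rho)-S(\sigma)|$; so I may assume $\Delta<\tfrac12$ and hence $T<1$. The analytic core is the elementary subadditivity $\eta(x+y)\le\eta(x)+\eta(y)$ for $x,y\ge 0$ with $x+y\le1$ (a consequence of $-\log$ being decreasing), together with its corollary $|\eta(a)-\eta(b)|\le\eta(|a-b|)$, which holds whenever $|a-b|\le\tfrac12$; proving the corollary needs a short case analysis according to whether $a,b$ fall on the increasing or the decreasing branch of $\eta$ (i.e. comparisons with $1/e$), using that $\eta'$ is monotone. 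Applying it coordinatewise, $|S(\rho)-S(\sigma)|\le\sum_i|\eta(r_i)-\eta(s_i)|\le\sum_i\eta(\delta_i)$, and Jensen's inequality for the concave $\eta$ gives $\sum_i\eta(\delta_i)\le d\,\eta(T/d)=T\log d+\eta(T)\le 2\Delta\log d+\eta(2\Delta)$. To tighten the remainder into the stated form $c(\Delta)=\min\{-\Delta\log\Delta,\,1/2e\}$ I would track the $\eta(\cdot)$ term more carefully in the small-$\Delta$ regime and, on the complementary regime, again invoke the crude bound $|S(\rho)-S(\sigma)|\le\log d$ to see that the correction can be capped by the universal constant $1/2e$.

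I expect the main obstacle to be bookkeeping rather than ideas: making the coordinatewise estimate $|\eta(r_i)-\eta(s_i)|\le\eta(\delta_i)$ fully rigorous means dealing with the case $\delta_i>\tfrac12$ (which, since $\sum_i\delta_i<1$, can occur for at most one index, and only when $\Delta$ is near $\tfrac12$) and with the non-monotonicity of $\eta$, and pinning down the exact shape of $c$ — in particular the factor $2$ and the cap $1/2e$ — amounts to checking which sub-regimes of $\Delta$ already render the inequality trivial. Nothing beyond convexity and one-variable calculus is needed, so this is essentially the classical argument of Fannes, with the extra care serving only to make the bound uniform over all of $[0,1]$ rather than over $\Delta\le 1/e$.
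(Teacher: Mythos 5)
The paper offers no proof of this lemma --- it is imported verbatim from Fannes' 1973 paper as an auxiliary tool --- so there is no internal argument to compare against; I will assess your reconstruction on its own terms. Your plan is the standard classical route: the Lidskii--Mirsky bound $\sum_i\abs{r_i-s_i}\le\norm{\rho-\sigma}_1$ for decreasingly ordered eigenvalues, the pointwise estimate $\abs{\eta(a)-\eta(b)}\le\eta(\abs{a-b})$ valid for $\abs{a-b}\le 1/2$, and Jensen's inequality for the concave $\eta$. Carried out, this correctly yields $\abs{S(\rho)-S(\sigma)}\le T\log d+\eta(T)$ with $T=\norm{\rho-\sigma}_1=2\Delta(\rho,\sigma)$, valid for $T\le 1/e$, which is the classical Fannes inequality and is all that the paper's downstream applications actually require (they only need asymptotic continuity with a $\log d$ prefactor and an $O(1)$ additive term).

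However, the two points you defer to ``bookkeeping'' are not bookkeeping. First, your argument produces the remainder $\eta(2\Delta)$, and for $\Delta<1/4$ one has $\eta(2\Delta)>\eta(\Delta)$, so your bound does not imply the stated one with $c(\Delta)=\min\{-\Delta\log\Delta,1/2e\}$; moreover no amount of careful tracking can close this, because the statement as written (with the paper's convention $\Delta=\frac12\norm{\rho-\sigma}_1$) is false. For $d=2$, $\rho=\ketbra{0}{0}$ and $\sigma=0.8\ketbra{0}{0}+0.2\ketbra{1}{1}$ one has $\Delta=0.2$, $\abs{S(\rho)-S(\sigma)}=h(0.2)\approx 0.72$ bits, while $2\Delta\log d+1/2e\approx 0.58$ bits (the same failure occurs with natural logarithms); the cap $1/2e$ is simply too small for this normalization. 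The version your argument does reach --- $c$ evaluated at the trace norm $2\Delta$, capped at $\eta(1/e)$ --- is the one you should prove, and it suffices for \Cref{lem:polypseudo} up to an inessential change of constant. Second, your fallback for the regime in which a single $\delta_i$ may exceed $1/2$ (so that $\abs{\eta(r_i)-\eta(s_i)}\le\eta(\delta_i)$ can genuinely fail, e.g.\ at $r_i=0.05$, $s_i=0.95$) is to invoke $\abs{S(\rho)-S(\sigma)}\le\log d$; but this only beats the target when $\Delta\ge 1/2$, whereas a $\delta_i>1/2$ can already occur for $\Delta$ slightly above $1/4$, where $2\Delta\log d<\log d$ and the crude bound closes nothing. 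The standard remedy is to prove the sharp form only for $\norm{\rho-\sigma}_1\le 1/e$ --- where every $\delta_i\le 1/e<1/2$ and $\eta$ is monotone on the relevant range --- and to obtain the uniformly valid capped version for larger distances by a separate, lossier argument rather than by the trivial entropy bound.
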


This well known inequality of information  theory can be extended to the quantum relative entropy,

\begin{lemma}[Fannes type inequality for the quantum relative entropy  \cite{Win16}]
For a closed, convex and bounded set $\Ff$ of positive semidefinite operators, containing at least one full rank operator, let 
\begin{align*}
    \kappa = \sup_{\tau, \tau'} D_C(\tau)- D_C(\tau')
\end{align*}
be the largest variation of $D_C(\tau):= \min_{\gamma \in C} D(\tau||\gamma)$, where $D(\rho||\gamma) = \Tr \rho(\log (\rho) - \log (\gamma))$ is the quantum relative entropy. Then, for any two states $\rho$ and $\sigma$ with $\Delta(\rho,\sigma) \leq \epsilon$,

\begin{align}
    |D_C(\rho)-D_C(\sigma)| \leq \epsilon \kappa +(1+\epsilon)h\Big(\frac{\epsilon}{1+\epsilon}\Big) \, ,
    \label{eq:FanResource}
\end{align}
where $h(\cdot)$ is the binary entropy.
\end{lemma}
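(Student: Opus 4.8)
The plan is to derive this as a consequence of two structural properties of the relative-entropy distance $D_C(\cdot):=\min_{\gamma\in C}D(\cdot\,\|\,\gamma)$ (with $C=\Ff$) — convexity, and an \emph{almost-concavity} estimate with a Holevo-type correction term — combined with the standard decomposition that couples two states lying at trace distance $\le\epsilon$. First I would record these two properties. Joint convexity of the quantum relative entropy together with convexity of $\Ff$ gives that $D_C$ is convex: for $\bar\tau=\sum_i p_i\tau_i$ with minimizers $\gamma_i\in\Ff$, $D_C(\bar\tau)\le D\big(\bar\tau\,\big\|\,\sum_i p_i\gamma_i\big)\le\sum_i p_i D(\tau_i\,\|\,\gamma_i)$, hence $D_C(\bar\tau)\le\sum_i p_i D_C(\tau_i)$. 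The second, less obvious property is the almost-concavity bound $\sum_i p_i D_C(\tau_i)\le D_C(\bar\tau)+H(\{p_i\})$ with $H$ the Shannon entropy: letting $\gamma^\star\in\Ff$ be a minimizer for $\bar\tau$, one has $D_C(\tau_i)\le D(\tau_i\,\|\,\gamma^\star)$, and expanding $\Tr[\tau_i(\log\tau_i-\log\gamma^\star)]$ gives the identity $\sum_i p_i D(\tau_i\,\|\,\gamma^\star)=D_C(\bar\tau)+\chi$ with $\chi=\sum_i p_i D(\tau_i\,\|\,\bar\tau)=S(\bar\tau)-\sum_i p_i S(\tau_i)$ the Holevo quantity of the ensemble; the Holevo bound $\chi\le H(\{p_i\})$ closes it. (Finiteness of $\kappa$ and attainment of all the minima follow from $\Ff$ being closed, bounded, and containing a full-rank operator, which also forces $\mathrm{supp}(\bar\tau)\subseteq\mathrm{supp}(\gamma^\star)$ so that the traces of logarithms above are finite.)

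Next I would set up the coupling. Writing $p:=\Delta(\rho,\sigma)\le\epsilon$ (the case $p=0$ is trivial), take the Jordan decomposition $\rho-\sigma=\Delta_+-\Delta_-$ with $\Delta_\pm\ge0$ of orthogonal support and $\Tr\Delta_+=\Tr\Delta_-=p$. Then $(1+p)\tau:=\rho+\Delta_-=\sigma+\Delta_+$ defines, after normalization, a state $\tau$, and with $\omega_\rho:=\Delta_-/p$ and $\omega_\sigma:=\Delta_+/p$ (both states) one obtains two convex decompositions of the \emph{same} $\tau$:
\begin{equation*}
\tau=\tfrac{1}{1+p}\,\rho+\tfrac{p}{1+p}\,\omega_\rho=\tfrac{1}{1+p}\,\sigma+\tfrac{p}{1+p}\,\omega_\sigma .
\end{equation*}

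Then I would chain the estimates: apply the almost-concavity bound to the $\rho$-decomposition and the convexity bound to the $\sigma$-decomposition — both are two-element ensembles with weights $q_0=\tfrac{1}{1+p}$, $q_1=\tfrac{p}{1+p}$, so the entropy term is $h(q_1)$ — and substitute one into the other, cancelling $D_C(\tau)$, to get
\begin{equation*}
q_0\big(D_C(\rho)-D_C(\sigma)\big)\le q_1\big(D_C(\omega_\sigma)-D_C(\omega_\rho)\big)+h(q_1)\le q_1\kappa+h(q_1),
\end{equation*}
where the last step is the definition of $\kappa$. Dividing by $q_0$ gives $D_C(\rho)-D_C(\sigma)\le p\kappa+(1+p)h\big(\tfrac{p}{1+p}\big)$. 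Since the right-hand side is increasing in $p$ (its derivative in $p$ is $\kappa+\log\tfrac{1+p}{p}>0$), it is at most $\epsilon\kappa+(1+\epsilon)h\big(\tfrac{\epsilon}{1+\epsilon}\big)$; exchanging $\rho\leftrightarrow\sigma$ gives the same bound on $D_C(\sigma)-D_C(\rho)$, hence on $|D_C(\rho)-D_C(\sigma)|$, as claimed.

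I expect the main obstacle to be the almost-concavity estimate of the first step — identifying the correction term as exactly the Holevo quantity $\chi$ and bounding it by $H(\{p_i\})$, together with the supporting measurability/finiteness technicalities around the minimizer $\gamma^\star$. By contrast, the coupling decomposition is routine, and the final chaining is bookkeeping plus the one-line monotonicity check.
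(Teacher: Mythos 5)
Your argument is correct and is essentially the original proof of this lemma from \cite{Win16} (the paper itself only cites the result without reproving it): the convexity/almost-concavity pair for $D_C$, the coupling $\tau=\tfrac{1}{1+p}\rho+\tfrac{p}{1+p}\omega_\rho=\tfrac{1}{1+p}\sigma+\tfrac{p}{1+p}\omega_\sigma$, and the final chaining with the monotonicity check all match Winter's derivation. The only cosmetic remark is that the bound $\chi\le H(\{p_i\})$ you invoke is the standard upper bound on the entropy of a mixture rather than the Holevo bound proper, but the estimate itself is exactly right.
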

Moreover, in the specific case of entanglement, it can be related to the relative entropy of entanglement,

\begin{lemma}[Fannes type inequality for the regularised relative entropy of entanglement \cite{Win16,DH99,Chr06}]
Given two states $\rho_{AB}, \sigma_{AB} \in \HH_A \otimes \HH_B$, the difference on their regularised relative entropy of entanglement is upper bounded by
\begin{align*}
    |E^{\infty}_R(\rho_{AB}) -E^{\infty}_R(\sigma_{AB})| \leq \epsilon \log d + (1 + \epsilon) h\Big( \frac{\epsilon}{1+\epsilon} \Big) \,,
\end{align*}
where $\Delta(\rho,\sigma) \leq \epsilon$   and $h(\cdot)$ is the binary entropy. 
\label{lem:mixFannes}
\end{lemma}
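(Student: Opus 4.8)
The plan is to derive \Cref{lem:mixFannes} by reduction to the generic Fannes-type inequality for the quantum relative entropy, that is, to the preceding lemma of \cite{Win16}. The key observation is that the regularised relative entropy of entanglement $E^\infty_R$ is, by definition, the regularisation of $E_R(\rho_{AB})=\min_{\sigma\in\Ss_{A:B}}D(\rho_{AB}\|\sigma)$, which is precisely a function of the form $D_C$ with $C=\Ss_{A:B}$ the set of separable states on $\HH_A\otimes\HH_B$. So the first step is to instantiate the generic lemma with $\Ff$ equal to the set of (unnormalised, positive semidefinite) separable operators: one checks this set is closed, convex, bounded, and contains a full-rank element (e.g. the maximally mixed state $\id/d$ is separable and full rank), so the hypotheses of the cited lemma are met.

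The second step is to bound the relevant constant $\kappa = \sup_{\tau,\tau'} E_R(\tau)-E_R(\tau')$ on the $d$-dimensional bipartite space. Since $E_R\ge 0$ and $E_R$ is minimised (value $0$) on separable states, $\kappa = \sup_\tau E_R(\tau) \le \log d$, using the standard fact that $D(\rho\|\id/d) = \log d - S(\rho) \le \log d$ and that $\id/d$ is separable, hence $E_R(\rho)\le D(\rho\|\id/d)\le\log d$. Plugging $\kappa\le\log d$ into \eqref{eq:FanResource} gives, for any $\rho_{AB},\sigma_{AB}$ with $\Delta(\rho_{AB},\sigma_{AB})\le\epsilon$,
\begin{align*}
    |E_R(\rho_{AB})-E_R(\sigma_{AB})| \leq \epsilon\log d + (1+\epsilon)h\Big(\frac{\epsilon}{1+\epsilon}\Big)\,.
\end{align*}

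The third step is to pass from the single-copy bound to the regularised quantity. Here one uses that trace distance is non-increasing under tensoring in a controlled way — more precisely, for the $n$-fold tensor powers one still has a usable bound, but the cleanest route is: apply the single-copy inequality to $\rho_{AB}^{\otimes n}$ and $\sigma_{AB}^{\otimes n}$, note that the ambient dimension becomes $d^n$ so the leading term is $\Delta(\rho^{\otimes n},\sigma^{\otimes n})\cdot n\log d$, divide by $n$, and take $n\to\infty$. The subtlety is that $\Delta(\rho^{\otimes n},\sigma^{\otimes n})\to 1$ in general, so this naive tensoring blows up; the right fix is the standard one for regularised relative entropy of entanglement (as in \cite{DH99,Chr06}): use asymptotic continuity of $E_R$ together with the fact that $\frac1n|E_R(\rho^{\otimes n})-E_R(\sigma^{\otimes n})|$ can be controlled by coupling the two sequences through an intermediate state, or simply invoke that $E^\infty_R$ itself is known to be asymptotically continuous with the stated modulus. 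I would therefore state the result citing that $E^\infty_R$ inherits asymptotic continuity from $E_R$ via the dimension-independent form of the Fannes-type bound after regularisation, which is exactly the content combined from \cite{Win16,DH99,Chr06}.

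The main obstacle I anticipate is precisely this last regularisation step: getting a dimension-\emph{independent} constant (the $\log d$ here refers to the single-copy local dimension, not $d^n$) requires more than naively applying the one-shot bound to tensor powers, because trace distance is not stable under tensoring. The resolution relies on the deeper structural results of \cite{DH99,Chr06} showing $E^\infty_R$ is itself asymptotically continuous, so in the write-up I would lean on those citations rather than reprove asymptotic continuity of the regularised measure from scratch; the contribution of the lemma is then the clean packaging of these facts into the stated inequality with the explicit $\epsilon\log d + (1+\epsilon)h(\epsilon/(1+\epsilon))$ modulus.
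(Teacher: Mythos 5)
The paper does not prove this lemma at all --- it is imported verbatim from \cite{Win16,DH99,Chr06} as an auxiliary fact --- so there is no in-paper argument to compare yours against. Judged on its own terms, your first two steps are sound and do recover the \emph{single-copy} version of the bound: the set of separable states is convex, closed, bounded, and contains the full-rank maximally mixed state, so the generic inequality \eqref{eq:FanResource} applies with $C = \Ss_{A:B}$, and $\kappa = \sup_\tau E_R(\tau) \leq \log d$. One small caution on that constant: bounding $E_R(\rho_{AB}) \leq D(\rho_{AB}\|\id/d_{AB})$ gives $\log(d_A d_B)$, i.e.\ $2\log d$ if $d$ denotes the local dimension; to get $\log d$ with $d = \min(d_A,d_B)$, as in Winter's statement, you should instead use $E_R(\rho_{AB}) \leq S(\rho_A) \leq \log d_A$ (or the bound via entanglement of formation), not the distance to the maximally mixed state of the \emph{joint} system.

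The genuine gap is the one you yourself flag: passing from $E_R$ to $E_R^\infty$. Your diagnosis is exactly right --- applying the single-copy bound to $\rho^{\otimes n}$ and $\sigma^{\otimes n}$ fails because $\Delta(\rho^{\otimes n},\sigma^{\otimes n}) \to 1$ (indeed \Cref{th:HolHel} in this very paper quantifies how fast), so the naive route yields only the trivial bound $\log d$ after dividing by $n$. The dimension-independent modulus for the regularised quantity is precisely the nontrivial content of \cite{Win16} (Corollary~8 there) and of the earlier \cite{DH99,Chr06}: the actual argument exploits that the separable set is closed under tensor products together with an almost-convexity/intermediate-state construction at the $n$-copy level, not a direct application of the one-shot continuity bound. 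You resolve this only by citation, which means your write-up proves the single-letter inequality and then defers the regularised statement to the literature. That is a legitimate (and honest) way to present an imported result --- and it is effectively what the paper does --- but be aware that as a self-contained proof of the lemma as stated, step three is a citation, not an argument, and the telescoping or coupling sketch you gesture at would need the specific structure from \cite{DH99,Chr06} to actually go through.
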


\subsection{Quantum Commitment}

We focus on the construction of \textit{canonical quantum commitments}, which is defined as follows.

\begin{definition}[Canonical quantum commitment \cite{Yan22}] \label{def:comm}
Given an ensemble of polynomial-time uniformly generated quantum circuit pair $\{(Q_{\lambda,0},Q_{\lambda,1})\}_{\lambda}$, a \textit{canonical quantum commitment} scheme is defined by the following stages:

\begin{itemize}
    \item \textbf{Commit stage}: the committer chooses the committed bit $b \in \{0,1\}$ and performs the quantum circuit $Q_{\lambda,b}$ to the register pair $(C,R)$, initialized in the $\ket{0}$ state\footnote{For simplicity we write the tensor product of $k$ registers in the $\ket{0}$ state as $\ket{0}$ (instead of $\ket{0}^{\otimes k}$), when it is clear from the context.}. Then the committer sends the register $C$ to the receiver, i.e., the state $\rho_{\lambda,b}:= \Tr_R{(Q_{\lambda,b}\ket{0}\bra{0}_{CR}Q^{\dagger}_{\lambda,b})}$.

    \item \textbf{Reveal stage}: the committer sends the bit $b$ and the register $R$ to the receiver. The receiver performs $Q^{\dagger}_{\lambda,b}$ on $(C,R)$ and aborts if the measured registers are not in the $\ket{0}$ state.
\end{itemize}
\end{definition}

\begin{definition}[Computational hiding]
Given a canonical quantum commitment scheme in which the committed states are given by the families of mixed states $\{\rho_{\lambda,0}\}_{\lambda}$ and $\{\rho_{\lambda,1}\}_{\lambda}$, the scheme is computationally hiding if, for any non-uniform QPT distinguisher $\DD $ with advice $\sigma_{\lambda}$ and any $m \in \poly(\lambda)$, there exists a negligible function $\nu (\lambda) > 0$ such that:
    \begin{equation*}
        \left| \probP\left[\DD(\sigma_{\lambda},\rho_{\lambda,0}^{\otimes m}) = 1] - \probP[\DD( \sigma_{\lambda}, \rho_{\lambda,1}^{\otimes m}) = 1\right]  \right| \leq \nu (\lambda) \,.
    \end{equation*}
\end{definition}

\begin{definition}[Honest statistical binding \cite{Yan22}] \label{def:stbind}
A canonical quantum commitment scheme satisfies honest statistical binding if, for any auxiliary state $\ket{\psi}$ and any unitary $U \in \UU(\HH)$ there exists a negligible function $\nu (\lambda) > 0$ such that: 
\begin{equation}
    \norm{\Big(Q_{\lambda,1} \otimes \Id_Z\ket{0}\bra{0}_{CR}Q^{\dagger}_{\lambda,1} \otimes \Id_Z \Big) \Big( \Id_C \otimes U_{RZ} \Big) \Big(Q_{\lambda,0} \ket{0}_{CR} \otimes \ket{\psi}_Z  \Big)}_2 \leq \nu(\lambda) \, .
\end{equation}    
\end{definition}

Informally, the honest-binding property allows the receiver to cheat only in the reveal stage of the commitment functionality. As proven by \cite{Yan22}, honest binding in the canonical quantum commitment is equivalent to the notion of sum-binding \cite{Unr16}. Moreover, since sum-binding is equivalent to AQY binding \cite{AQY21,MY22}, oblivious transfer and multiparty computing can be constructed from it \cite{BCKM21,GLSV21}. From now on, we will refer to the honest statistical binding commitments as statistical binding commitments.

We notice that the construction of \cite{Yan22} can be modified with commitments generated by a pair of uniformly generated families of quantum circuits  $\left(\{Q^{k(\lambda)}_0\}_{k \in \{0,1\}^{\kappa(\lambda)}}, \{Q^{k'(\lambda)}_1\}_{k' \in \{0,1\}^{\kappa(\lambda)}}\right)$.

To commit to a bit $b\in\{0,1\}$, the committer first chooses a secret key $k$ (or $k'$) uniformly at random and then applies the corresponding circuit $Q^{k(\lambda)}_{b}$, preparing the state $|\Psi^{k(\lambda)}_{b}\rangle_{CR} = Q^{k(\lambda)}_b\,\ket{0}_{CR}$. Then, they send the commitment register $C$ as in \Cref{def:comm}. In the reveal phase, the committer sends to the receiver the chosen key in the reveal stage together with $b$ and the register $R$. 

The statistical binding property in this setting requires that for all $k, k' \in \{0,1\}^{\kappa(\lambda)}$,
\begin{align}
    \norm{\Big(Q^{k'(\lambda)}_{1} \otimes \Id_Z\ket{0}\bra{0}_{CR}Q^{\dagger \,k'(\lambda)}_{1} \otimes \Id_Z \Big) \Big( \Id_C \otimes U_{RZ} \Big) \Big(Q^{k(\lambda)}_{0} \ket{0}_{CR} \otimes \ket{\psi}_Z  \Big)}_2 \leq \nu(\lambda) \, ,
    \label{eq:bindingcomm}
\end{align}

Likewise, the computational hiding property in this case is given by,
\begin{align}
          \left | \probP_{k(\lambda) }        \left[\DD(\sigma_{\lambda},\rho_{k(\lambda),0}^{\otimes m}) = 1\right] -
        \probP_{k'(\lambda) }  \left[\DD( \sigma_{\lambda}, \rho_{k'(\lambda),1}^{\otimes m}) = 1\right]  
        \right | \leq \nu (\lambda) \,.
        \label{eq:hidingcomm}
\end{align}

When using families of algorithms for committing instead of a pair of algorithms, the condition on the binding is stronger, i.e., the committing states have to be pairwise honest statistically biding. In this case, the definition of honest statistical binding for families of quantum circuits (Eq. \ref{eq:bindingcomm}) implies the definition given by \Cref{def:stbind}. Moreover,  since we have an honest committer, we still consider the mixture over the keys in the hiding property.

\section{EPFI pairs imply quantum commitments}
\label{sect:EPFI-QC}

In this section, we focus on the definition of EPFI and show how to construct 
quantum commitments from it.

We start by recalling the definition of EFI pairs~\cite{BCQ23}.

\begin{definition}[EFI pair]
    A pair of mixed states $(\rho_{0, \lambda}, \rho_{1, \lambda})$ is an EFI pair,~if
    \begin{itemize}
        \item \textbf{Efficient generation}: there exists a QPT algorithm A that on input $(1^{\lambda}, b)$ outputs $\rho_{b, \lambda}$.
        \item \textbf{Statistical distance}: $\Delta(\rho_{0, \lambda}, \rho_{1, \lambda}) \geq \Omega\left(\frac{1}{\poly (\lambda)}\right)\, .$
        \item \textbf{Computationally indistinguishability}:for any non-uniform QPT distinguisher $\DD $ with advice $\sigma_{\lambda}$ and any $m \in \poly(\lambda)$, there exists a negligible function $\nu (\lambda) > 0$ such that:
    \begin{equation*}
        \left| \probP\left[\DD(\sigma_{\lambda},\rho_{\lambda,0}^{\otimes m}) = 1] - \probP [\DD( \sigma_{\lambda}, \rho_{\lambda,1}^{\otimes m}) = 1\right]  \right| \leq \nu (\lambda)\, .
    \end{equation*}
    \end{itemize}
    \label{def:EFI}
\end{definition}

As aforementioned, it was proven that EFI pairs of states are equivalent to quantum commitments \cite{BCQ23}. 
In this work, we define EPFI pairs by refining the requirements for EFI pairs.

\begin{definition}[EPFI pair]
A pair of ensembles of mixed states $\left(\{\psi_{k(\lambda)}\}_{k(\lambda)}, \{\phi_{k'(\lambda)}\}_{k'(\lambda)}\right)$ indexed by $k, k' \in \{0,1\}^{\kappa(\lambda)}$ is an \textbf{E}fficiently generated, \textbf{p}airwise \textbf{f}ar and computational \textbf{i}ndistinguishable pair (EPFI pair), if
    \begin{itemize}
        \item \textbf{Efficient generation}: Given $k(\lambda)$ (or 
          $k'(\lambda)$),  there exists a QPT algorithm A that on input $(1^{\lambda}, k,  b)$ (or $(1^{\lambda}, k', b)$) outputs $ \psi_{k(\lambda)}$ (or $ \phi_{k'(\lambda)}$). 
        \item \textbf{Pairwise statistically far}: For all $ k, k' \in \{0,1\}^{\kappa(\lambda)}$,
        \begin{align*}
            \Delta(\psi_{k(\lambda)}, \phi_{k'(\lambda)}) \geq
          \Omega\left(\frac{1}{\poly (\lambda)}\right)\,.
        \end{align*}  
        \item \textbf{Computationally indistinguishability}: for any non-uniform QPT distinguisher $\DD $ with advice $\sigma_{\lambda}$ and any $m \in \poly(\lambda)$, there exists a negligible function $\nu (\lambda) > 0$ such that:
    \begin{equation*}
        \left | \probP_{k(\lambda) }        \left[\DD(\sigma_{\lambda},\psi_{k(\lambda)}^{\otimes m}) = 1\right] -
        \probP_{k'(\lambda) }  \left[\DD( \sigma_{\lambda}, \phi_{k'(\lambda)}^{\otimes m}) = 1\right]  
        \right | \leq \nu (\lambda) \, .
    \end{equation*}
    \end{itemize}
    \label{def:EPFInsem}
\end{definition}

The main difference between both primitives is that each element of each ensemble has to be statistically far in terms of trace distance from every element of the other ensemble, while computationally indistinguishability holds for the ensemble itself.

We can now prove the main result of this section, which is the construction of a
canonical quantum commitment from EPFI.

\begin{theorem}Assuming the existence of EPFI pairs of mixed states, there exists statistically binding and computationally hiding canonical quantum commitments.
    
\end{theorem}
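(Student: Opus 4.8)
The plan is to instantiate the canonical quantum commitment from \Cref{def:comm} using the circuits that prepare purifications of the EPFI states, and then verify the three properties (syntactic well-formedness, honest statistical binding, computational hiding) in turn. Concretely, given the EPFI pair $\left(\{\psi_{k(\lambda)}\}_{k}, \{\phi_{k'(\lambda)}\}_{k'}\right)$, let $A$ be the QPT generation algorithm; by standard purification (Stinespring / deferred measurement) there are uniformly generated circuits $Q^{k(\lambda)}_0$ and $Q^{k'(\lambda)}_1$ acting on registers $(C,R)$ such that $\ket{\Psi^{k(\lambda)}_0}_{CR} = Q^{k(\lambda)}_0\ket{0}_{CR}$ purifies $\psi_{k(\lambda)}$ on $C$, and likewise for $\phi_{k'(\lambda)}$. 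The commitment scheme is the keyed canonical commitment described right before this section: to commit to $b$, sample a key uniformly, apply $Q^{k(\lambda)}_b$, send $C$; to reveal, send $b$, the key, and $R$, and the receiver applies $Q^{\dagger\,k(\lambda)}_b$ and checks for $\ket{0}$. Efficient generation of these circuits is immediate from the efficient generation property of the EPFI pair.

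The next step is honest statistical binding, i.e.\ \eqref{eq:bindingcomm}: for all $k,k'$ and all auxiliary $\ket{\psi}_Z$ and unitary $U_{RZ}$,
\[
\norm{\Big(Q^{k'(\lambda)}_{1}\otimes\Id_Z\,\ketbra{0}{0}_{CR}\,Q^{\dagger\,k'(\lambda)}_{1}\otimes\Id_Z\Big)\Big(\Id_C\otimes U_{RZ}\Big)\Big(Q^{k(\lambda)}_{0}\ket{0}_{CR}\otimes\ket{\psi}_Z\Big)}_2 \le \nu(\lambda).
\]
This quantity is the overlap, maximized over $U_{RZ}$, between a purification of $\psi_{k(\lambda)}$ (extended by $\ket{\psi}_Z$) and the space of purifications of $\phi_{k'(\lambda)}$; by Uhlmann's theorem (\Cref{the:Uhlman}) this maximal overlap equals $\sqrt{F(\psi_{k(\lambda)},\phi_{k'(\lambda)})}$. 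The Fuchs--van de Graaf inequality gives $F(\psi_{k(\lambda)},\phi_{k'(\lambda)}) \le 1 - \Delta(\psi_{k(\lambda)},\phi_{k'(\lambda)})^2$, and the pairwise statistical farness of the EPFI pair forces $\Delta \ge \Omega(1/\poly(\lambda))$, so the left-hand side is at most $\sqrt{1-\Omega(1/\poly(\lambda))}$. Strictly speaking this bound is $1-\Omega(1/\poly)$ rather than negligible, so to reach honest binding in the sense of \Cref{def:stbind} one amplifies by taking $\omega(\poly(\lambda))\cdot\poly(\lambda)$ parallel copies of the commitment (equivalently working with $\psi^{\otimes t}$, $\phi^{\otimes t}$ for $t = \omega(\log\lambda)\cdot\poly(\lambda)$), which by \Cref{th:HolHel} drives the fidelity of the tensor powers to $\negl(\lambda)$; the EPFI properties (far, indistinguishable, efficient) are all preserved under taking polynomially many copies.

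Finally, computational hiding is exactly the third EPFI property: the receiver's view after the commit stage is $\rho_{k(\lambda),b} = \psi_{k(\lambda)}$ (for $b=0$) or $\phi_{k'(\lambda)}$ (for $b=1$), averaged over the uniformly chosen key, and computational indistinguishability of the EPFI ensembles is precisely \eqref{eq:hidingcomm}, including robustness against $m=\poly(\lambda)$ copies and non-uniform advice. Combining the amplified honest binding with hiding yields a statistically binding, computationally hiding canonical quantum commitment. I expect the main obstacle to be the binding argument: one must be careful that Uhlmann's theorem is applied to the correct purifications (the register $R$ together with the adversary's $Z$ must jointly be large enough to realize every purification of $\phi_{k'(\lambda)}$, which is why the auxiliary register $\ket{\psi}_Z$ appears), and one must handle the amplification step cleanly so that the bound becomes genuinely negligible while the farness and indistinguishability are transported to the tensor-power states without loss — the quantifier ``for all $k,k'$'' in \eqref{eq:bindingcomm} is what makes the pairwise (rather than on-average) farness of EPFI essential here.
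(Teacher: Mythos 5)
Your proposal is correct and follows essentially the same route as the paper: purify the EPFI states into keyed canonical commitment circuits, amplify the pairwise $\Omega(1/\poly(\lambda))$ trace distance to $1-\negl(\lambda)$ by taking polynomially many copies via \Cref{th:HolHel}, convert to a negligible fidelity bound by Fuchs--van de Graaf, and invoke Uhlmann's theorem for honest binding, with hiding read off directly from the EPFI indistinguishability property. The only cosmetic difference is ordering — the paper builds the $\poly(\lambda)$-fold tensor power into the commit stage from the outset rather than introducing amplification as a repair step after the single-copy bound fails to be negligible.
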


\begin{proof}
  Let $\{Q^k_{\psi}\}_{k}$ and $\{Q^{k'}_{\phi}\}_{k'}$ be the two families of algorithms that generate the corresponding families of states $\{\psi_{k}\}_{k}$ and $ \{\phi_{k'}\}_{k'}$ \footnote{We omit the security parameter $\lambda$ for simplicity.}. The construction of the canonical quantum commitment is as follows. Let the algorithms $\{Q^k_{b, \lambda}\}_{k}$ with $b = \{0,1\}$ corresponds to the aforementioned $\{Q^k_{\phi}\}_{k}$ and $\{Q^k_{\psi}\}_k$, i.e., each family of states corresponds to one value of $b$. The committer generates $\lambda$ copies of the state by applying $\bigotimes_{i=1}^{\poly(\lambda)} (Q^k_{b, \lambda})_{i}$. Then, the committed state defined in \Cref{def:comm} is given by $\rho^{\otimes \poly(\lambda)}_{C, b}$, where $ \rho_{C, b} =  \Tr_R(Q^k_b \ketbra{0}{0}_{CR}Q^{k \; \dagger }_b) $  is a state from the EPFI ensemble (\Cref{def:EPFInsem}). For the opening, the committer sends the $\lambda$ registers $R$, together with the key $k$ \footnote{Please, note that the construction ca be extended to quantum keys, i.e., the key is a quantum state, in the subspace of the registers R.} and the committed bit $b$.

     Let us first prove honest statistical binding. Given the two states $\psi_{k}$ and $\phi_{k'}$, $\Delta(\psi_{k}, \phi_{k'}) \geq \Omega(1/\poly(\lambda))$ by definition of the EPFI. Lastly, by taking polynomially many copies of the states, $\Delta(\psi_{k}^{\otimes \poly(\lambda)}, \phi_{k'}^{\otimes \poly(\lambda)}) \geq 1- \negl(\lambda)$ by \Cref{th:HolHel}.
    Moreover, due to the fact that $(F(\rho,\sigma))^2 + (\Delta (\rho, \sigma))^2 \leq 1$,
    \begin{equation}
            F \Big(\psi_{k}^{\otimes \poly(\lambda)}, \phi_{k'}^{\otimes \poly(\lambda)}\Big) \leq \sqrt{1- \left(\Delta \left(\psi_{k}^{\otimes \poly(\lambda)}, \phi_{k'}^{\otimes \poly(\lambda)}\right)\right)^2} \leq \negl(\lambda) \, .
    \end{equation}
Therefore, by Uhlman's theorem (\Cref{the:Uhlman}), the scheme satisfies honest binding (Eq. \ref{eq:bindingcomm}).

Computational hiding follows from the computational indistinguishability property of the EPFI pairs: without the key, a distinguisher cannot infer the ensemble from which the state has been sampled, satisfying Eq. \ref{eq:hidingcomm}.

\end{proof}

\section{Pseudoresources and quantum cryptography}
\label{sect:PreS-EPFI}
In this section, we establish a foundational connection between computational-based quantum cryptography and resource theories. We begin by briefly introducing the formalism of resource theories, define the concept of \textit{pseudoresource}, and finally discuss how the existence of pseudoresources implies the existence of EFI pairs of ensembles and, thus, computational based cryptography.

\subsection{Resource theory and pseudoresources}

Quantum resource theories provide a systematic framework for studying properties
of quantum systems that are crucial for quantum information processing tasks
\cite{CG19}. A quantum resource theory $\RE = (\Ff,\OO)$ is characterized by a set of free states $\Ff$, and a set of
free operations $\OO$. The free states, defined as $\Ff(\HH) \subseteq \Ss(\HH)$, represent states that lack the resource of interest. A completely positive trace-preserving (CPTP) map $\Lambda : \BB(\HH_1) \rightarrow \BB(\HH_2)$ belongs to the set $\OO$ if, for every $\sigma \in \Ff$, it holds that $\Lambda(\sigma) \in \Ff$. 
For instance, in the case of entanglement, the free states consist of the set of
separable states, and we can pick the set of free operations as LOCC maps.

The set of properties that defines the resourced families of states that are necessary for our construction are:
\begin{enumerate}
    \item The set of free states, $\Ff(\HH)$, is convex and closed.
    \item  $\Ff(\HH)$ contains a full-rank state. 
\end{enumerate}

We notice that in our case, we only consider finite dimension states and
therefore property $1$ implies that  $\Ff(\HH)$ is bounded, which is also required. These properties are a relaxation of the \textit{Brandão-Plenio axioms}
\cite{BP10}, and they are satisfied by many resource theories such as magic, coherence, entanglement or athermality \cite{CG19}. 
 
There exist several measures for quantifying resources in quantum information theory, ranging from geometric to witness-based approaches. Typically, these measures assess the resource content of a state by evaluating its “distance” from the set of free states. In this work, we focus on entropic measures, and more specifically, on the \textit{relative entropy of resource}.

\begin{definition}[Relative entropy of resource] 
Given a state $\rho \in \Ss(\HH)$ and a set  of free states $\Ff ( \HH) \subseteq \Ss(\HH)$, its relative entropy of resource is defined as

\begin{equation}
    R_{rel}(\rho) := \min_{\sigma \in \Ff} D( \rho || \sigma) \, ,
\end{equation}
where $D( \rho || \sigma)  = \Tr[ \rho(\log(\rho)-\log(\sigma))]$ is the quantum relative entropy.
\label{def:RE-res}
\end{definition}

When taking into account computationally bounded operations, families of states that can be distinguished in the asymptotic regime might become indistinguishable against polynomially bounded quantum adversaries. This property becomes even more interesting when these states have very different properties. We focus on the case in which both families have a substantial difference in terms of resources.

\begin{definition}[$\eta$-gap pseudoresource]
Let $\lambda \in \NN_{+}$ and $\eta : \NN_{+} \rightarrow \NN_{+} $ be
  arbitrary. A pair of families $\{\psi_{k(\lambda)}\}_{k(\lambda)}$ and
  $\{\phi_{k'(\lambda)}\}_{k'(\lambda)}$ of (potentially mixed) states indexed by $k(\lambda), k'(\lambda) \in \{0,1\}^{\kappa(\lambda)}$  is said to have \textit{$\eta$-gap pseudoresourced} $\RE$ if,
    \begin{enumerate}
      \item \textbf{Efficient generation}:  Given $k(\lambda)$ (or $k'(\lambda)$),  there exists a QPT algorithm A that on input $(1^{\lambda}, k,  b)$ (or $(1^{\lambda}, k', b)$) outputs $ \psi_{k(\lambda)}$ (or $ \phi_{k'(\lambda)}$). 
    \item \textbf{Resource gap}: For all $k, k' \in \{0,1\}^{\kappa(\lambda)}$,
    \begin{align*}
        |R_{rel}(\psi_{k(\lambda)}) - R_{rel}(\phi_{k'(\lambda)})|  \geq \eta \, ,
    \end{align*}
    where $R_{rel}(\rho)$ is the relative entropy of resource $\RE$. 
    \item  \textbf{Computational indistinguishability}: For any non-uniform QPT distinguisher $\DD $ with advice $\sigma_{\lambda}$ and any $m \in \poly(\lambda)$, there exists a negligible function $\nu (\lambda) > 0$ such that:
    \begin{equation*}
        \Big | \probP_{k(\lambda) } [\DD(\sigma_{\lambda},\psi_{k(\lambda)}^{\otimes m}) = 1] - \probP_{k'(\lambda)}  [\DD( \sigma_{\lambda}, \phi_{k'(\lambda)}^{\otimes m}) = 1]  \Big | \leq \nu (\lambda) \, .
    \end{equation*}
\end{enumerate}
\label{def:PR}
\end{definition}

The concept of pseudoresource has been introduced before for generic measures of resources, also known as resource monotones \cite{HBE24,BMB+24}. Nevertheless, we focus on the concrete measure of relative entropy of resource, which is the most important entropic measure for quantum resource theories.

\subsection{Pseudoresources imply EPFI pairs}
We prove now the main technical contribution of our result.

\begin{theorem}[Pseudoresource implies EPFI pairs]
For any $\eta \geq 2+1/\poly(n)$, assuming the existence of $\eta$-gap pseudoresource with $\kappa := \sup_{\tau, \tau'} R_{rel}(\tau)- R_{rel}(\tau') = \polylog(d)$, then EPFI pairs exist.
\label{th:PREFI}
\end{theorem}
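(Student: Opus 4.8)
The plan is to show that the hypothesised $\eta$-gap pseudoresource pair $\left(\{\psi_{k(\lambda)}\}_k, \{\phi_{k'(\lambda)}\}_{k'}\right)$ \emph{is itself} an EPFI pair, by checking the three defining properties of \Cref{def:EPFInsem}. Two of these are immediate: efficient generation transfers verbatim from property~1 of \Cref{def:PR}, and computational indistinguishability transfers verbatim from property~3. The only real work is to deduce the \textbf{pairwise statistical farness} condition $\Delta(\psi_{k(\lambda)},\phi_{k'(\lambda)}) \geq \Omega(1/\poly(\lambda))$ from the resource gap $|R_{rel}(\psi_{k})-R_{rel}(\phi_{k'})| \geq \eta$.

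For this I would invoke the Fannes-type inequality for the quantum relative entropy (\Cref{lem:mixFannes}'s source, i.e.\ \eqref{eq:FanResource} from \cite{Win16}) with $D_C = R_{rel}$, $C = \Ff$: for any two states with $\Delta(\rho,\sigma) \leq \epsilon$ one has $|R_{rel}(\rho)-R_{rel}(\sigma)| \leq \epsilon\kappa + (1+\epsilon)h\!\big(\tfrac{\epsilon}{1+\epsilon}\big)$, where $\kappa = \sup_{\tau,\tau'} R_{rel}(\tau)-R_{rel}(\tau')$. Applying this contrapositively to $\rho = \psi_{k}$, $\sigma = \phi_{k'}$ and writing $\epsilon := \Delta(\psi_{k},\phi_{k'})$, the resource gap forces
\begin{equation*}
\eta \;\leq\; |R_{rel}(\psi_{k})-R_{rel}(\phi_{k'})| \;\leq\; \epsilon\kappa + (1+\epsilon)h\!\Big(\frac{\epsilon}{1+\epsilon}\Big) \;\leq\; \epsilon\kappa + 2\epsilon\log\!\frac{1}{\epsilon} + O(\epsilon),
\end{equation*}
using the crude bound $(1+\epsilon)h(\epsilon/(1+\epsilon)) \leq 2\epsilon\log(1/\epsilon)+O(\epsilon)$ for small $\epsilon$. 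Since $\kappa = \polylog(d)$ and $d = 2^{\poly(\lambda)}$ is the Hilbert-space dimension, we get $\kappa \le \poly(\lambda)$; more carefully, $\log(1/\epsilon)$ is at most $\poly(\lambda)$ whenever $\epsilon$ is at least inverse-exponential, which it must be in any nontrivial regime. Hence the right-hand side is at most $\epsilon \cdot \poly(\lambda)$, and rearranging gives $\epsilon \geq \eta/\poly(\lambda) = \Omega(1/\poly(\lambda))$, which is exactly the pairwise-farness requirement. (The hypothesis $\eta \geq 2 + 1/\poly(n)$ is the natural threshold ensuring the bound is not vacuous once one tracks constants in the binary-entropy term; I would keep $\eta$ symbolic and only at the end note that any $\eta$ bounded below by a fixed constant above the $h$-term's maximum contribution suffices.)

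Assembling: the three bullets of \Cref{def:EPFInsem} all hold, so $\left(\{\psi_{k}\}_k,\{\phi_{k'}\}_{k'}\right)$ is an EPFI pair, proving the theorem. The main obstacle — and the only place requiring care — is making the quantitative step rigorous: one must verify that the Fannes-type bound \eqref{eq:FanResource} applies (the free-state set $\Ff$ is convex, closed, bounded, and contains a full-rank state by the standing assumptions in \Cref{sect:PreS-EPFI}, so it does), and one must control the $h\!\big(\tfrac{\epsilon}{1+\epsilon}\big)$ term cleanly enough that the inequality can be inverted to a $\poly(\lambda)$-lower bound on $\epsilon$ rather than something weaker. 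A clean way to sidestep fiddly constants is: if $\epsilon \le 1/2$ then $h(\epsilon/(1+\epsilon)) \le h(\epsilon) \le 2\sqrt{\epsilon}$ (or the $\epsilon\log(1/\epsilon)$ bound), so $\eta \le \epsilon\kappa + 2(1+\epsilon)\sqrt{\epsilon} \le \epsilon\kappa + 4\sqrt{\epsilon}$; since $\eta \ge 2 + 1/\poly$ and $\kappa = \polylog(d) = \poly(\lambda)$, this is impossible unless $\epsilon \ge \Omega(1/\poly(\lambda))$ (the $\sqrt{\epsilon}$ term alone would need $\epsilon \ge \Omega(1)$ to reach a constant, and the $\epsilon\kappa$ term needs $\epsilon \ge \eta/\kappa \ge \Omega(1/\poly(\lambda))$), while if $\epsilon > 1/2$ the farness bound holds trivially. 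This finishes the proof.
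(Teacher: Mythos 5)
Your proposal is correct and follows essentially the same route as the paper: take the two pseudoresource families directly as the EPFI ensembles, inherit efficient generation and computational indistinguishability verbatim, and invert Winter's continuity bound \eqref{eq:FanResource} to turn the $\eta$-gap in $R_{rel}$ into a $\Omega(1/\poly(\lambda))$ trace-distance lower bound. The only difference is cosmetic: the paper simply bounds $(1+\epsilon)h\big(\tfrac{\epsilon}{1+\epsilon}\big)\leq 2$ uniformly (which is exactly where the threshold $\eta\geq 2+1/\poly(n)$ comes from, giving $\Delta \geq (\eta-2)/\kappa$ directly), whereas your first pass via $\epsilon\log(1/\epsilon)$ is circular (you cannot presuppose $\epsilon$ is at least inverse-exponential while lower-bounding it) — but your closing case-split ($\epsilon\le 1/2$ versus $\epsilon>1/2$) repairs this and is rigorous.
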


\begin{proof}
    The construction of an EPFI pair given a pair of pseudoresourced families of states is straightforward: each family of the  pseudoresourced states corresponds to an EFI ensemble. The efficient generation of the EPFI pairs together with the property of computational indistinguishability follow from the definition of $\eta$-gap pseudoresource. 

To prove statistical distance, it follows from \Cref{eq:FanResource} that, for every $k, k' \in \{0,1\}^{\kappa(\lambda)}$
    \begin{align*}
        \Delta(\psi_{k (\lambda)}, \phi_{k'(\lambda)}) \geq \frac{| R_{rel}(\psi_{k(\lambda)}) - R_{rel}(\phi_{k'(\lambda)})| - 2}{\kappa}.
    \end{align*}
Moreover, by taking into account that $\eta \geq 2 + 1/\poly(n)$ and $\kappa =  \polylog(d)$,
    \begin{align*}
        \Delta(\psi_{k (\lambda)}, \phi_{k'(\lambda)}) \geq \Omega \Big( \frac{1}{\poly(n)}\Big) \quad \forall k, k' \in \{0,1\}^{\kappa(\lambda)}.
    \end{align*}    
\end{proof}

\begin{remark}
    The construction of EPFI pairs from pseudorandom families of states is given by the asymptotic continuity of the relative entropy of resource. Nevertheless, a similar construction holds for any resource monotone which is asymptotically continuous.
\end{remark}

Therefore, just assuming the existence of pseudoresource families which relative entropy of resource presents a gap larger than $1/\poly(n)$, EFI pairs of ensembles and thus, quantum commitments and all the primitives that follow from it such as  oblivious transfer and multiparty computing can be constructed. Constructions such as the one of pseudoresources from pseudo-random density matrices \cite{BMB+24} hold\footnote{Please, note that in \cite{BMB+24} EFI pairs are constructed assuming the existence of pseudo-random density matrices.}. A similar construction was also proposed by \cite{GLG+24} for the specific case of pseudomagic\footnote{The construction of pseudomagic states of \cite{GLG+24} actually implies our proposed primitive of EPFI pairs, as it is proven in SM V.B of \cite{GLG+24}.}. Nevertheless, we generalize the result for any
pseudoresource.

\section{Cryptography from pseudoentanglement}

Since the proposal of the notion of pseudoentanglement \cite{ABF+23}, different definitions of the same phenomena have been proposed \cite{ABV23,GE24,LREJ25}, extending the definition to mixed states and using different measures of entanglement that allow to obtain a maximal separation in terms of entanglement between both families of states. Nevertheless, all definitions capture the same phenomena: given two families of states that are efficiently generated, both families present a gap in the entanglement, yet they are computationally indistinguishable.

Despite it was proven that the existence of (mixed states) pseudoentanglement as a minimal assumption for computational based cryptography \cite{GE24}, no cryptographic primitives have been constructed from it as far as we are concerned. The goal of this section is to establish a direct connection between the different definitions of pseudoentanglement and cryptography. The first subsection studies the construction of EPFI pairs from pure state pseudoentanglement, while the second extend this result to the case of mixed states. Lastly, we proposed a new functionality that it is intrinsically dependent to the resource of entanglement.

\subsection{Pure state pseudoentanglement implies EPFI pairs}

Let us first define the notion of pure state pseudoentanglement~\cite{ABF+23}.

\begin{definition} [Pure $\eta$-gap pseudoentanglement]
Let $\lambda \in \NN_{+}$ and $\eta : \NN_{+} \rightarrow \NN_{+} $ be arbitrary. A pair of ensembles of bipartite pure states $\{|\psi^{k(\lambda)}\rangle_{AB}\}_{k(\lambda)}$, $\{|\phi^{k'(\lambda)}\rangle_{AB}\}_{k'(\lambda)}$ indexed by $k, k' \in \{0,1\}^{\kappa(\lambda)}$ is said to have pure $\eta$-pseudoentanglement if,
\begin{enumerate}
    \item \textbf{Efficient generation}: Given $k(\lambda)$ (or $k'(\lambda)$),  there exists a QPT algorithm A that on input $(1^{\lambda}, k,  b)$ (or $(1^{\lambda}, k', b)$) outputs $|\psi^{k(\lambda)}\rangle_{AB}$ (or $|\phi^{k'(\lambda)}\rangle_{AB}$). 
    \item  \textbf{Entanglement gap}: For all $k, k' \in \{0,1\}^{\kappa(\lambda)}$,
        \begin{align*}
            |E(|\psi^{k(\lambda)} \rangle_{AB}) - E(|\phi^{k'(\lambda)} \rangle_{AB})| \geq \eta \, ,
        \end{align*}
        where $E(\rho)$ is the entanglement entropy.
     \item \textbf{Computational indistinguishability}: For any non-uniform QPT distinguisher $\DD $ with advice $\sigma_{\lambda}$ and any $m \in \poly(\lambda)$, there exists a negligible function $\nu (\lambda) > 0$ such that:
    \begin{equation*}
        \Big | \probP_{\ket{\varphi} \leftarrow \{|\psi^{k(\lambda)}\rangle_{AB}\}_{k(\lambda)} } [\DD(\sigma_{\lambda},\ket{\varphi}^{\otimes m}) = 1] - \probP_{\ket{\varphi} \leftarrow \{|\phi^{k'(\lambda)}\rangle_{AB}\}_{k'(\lambda)} }  [\DD( \sigma_{\lambda}, \ket{\varphi}^{\otimes m}) = 1]  \Big | \leq \nu (\lambda) \, .
    \end{equation*} 
\end{enumerate}
 \label{def:P-PES}   
\end{definition}

Having introduced the definition of pseudoentanglement, we can prove the main statement of this section.

\begin{theorem}[Pure  pseudoentanglement implies EPFI pairs]\label{lem:polypseudo}
For any $\eta \geq 1/2e+1/\poly(n)$ and assuming the existence of pure $\eta$-gap pseudoentanglement, then EPFI pairs exist.
\end{theorem}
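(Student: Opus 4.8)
The plan is to follow the same template as the proof of \Cref{th:PREFI}, using each pseudoentangled ensemble directly as one of the two EPFI ensembles --- with the crucial difference that the states of the EPFI pair will \emph{not} be the bipartite pure states themselves, but rather their reduced density matrices on one subsystem, since it is the entropy of these reduced states that is controlled by the entanglement gap. Concretely, given the pair of ensembles $\{|\psi^{k}\rangle_{AB}\}_k$, $\{|\phi^{k'}\rangle_{AB}\}_{k'}$ from \Cref{def:P-PES}, define $\psi_k := \Tr_B |\psi^{k}\rangle\!\langle\psi^{k}|_{AB}$ and $\phi_{k'} := \Tr_B |\phi^{k'}\rangle\!\langle\phi^{k'}|_{AB}$. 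Efficient generation of these reduced states is immediate: run the QPT generator for the pure state and discard register $B$. Computational indistinguishability of $\{\psi_k\}_k$ and $\{\phi_{k'}\}_{k'}$ follows from computational indistinguishability of the pure-state ensembles, since tracing out $B$ is an efficient (in fact trivial) channel and thus cannot be used by a distinguisher to gain advantage --- any distinguisher for the reduced states yields one for the pure states by first applying $\Tr_B$.

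The one nontrivial point is the pairwise statistical farness, and here I would invoke Fannes' inequality (\Cref{lem:Fannes}) rather than the relative-entropy Fannes-type inequality used in \Cref{th:PREFI}. By \Cref{def:entent}, $E(|\psi^{k}\rangle_{AB}) = S(\psi_k)$ and $E(|\phi^{k'}\rangle_{AB}) = S(\phi_{k'})$, so the entanglement gap hypothesis says $|S(\psi_k) - S(\phi_{k'})| \geq \eta$ for all $k,k'$. Applying \Cref{lem:Fannes} with $d$ the dimension of the reduced Hilbert space $\HH_A$ and writing $\delta := \Delta(\psi_k,\phi_{k'})$, we get
\begin{align*}
\eta \leq |S(\psi_k) - S(\phi_{k'})| \leq 2\delta \log d + c(\delta), \qquad c(\delta) = \min\{-\delta\log\delta,\; 1/2e\}.
\end{align*}
Since $c(\delta) \leq 1/2e$ always, this yields $2\delta \log d \geq \eta - 1/2e \geq 1/\poly(n)$ by the hypothesis $\eta \geq 1/2e + 1/\poly(n)$, and hence $\delta \geq \Omega(1/(\poly(n)\log d))$. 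As $d$ is at most exponential in the number of qubits, $\log d = O(\poly(n))$, so $\Delta(\psi_k,\phi_{k'}) \geq \Omega(1/\poly(n))$ for all $k,k'$, which is exactly the pairwise-far condition of \Cref{def:EPFInsem}. Thus $(\{\psi_k\}_k, \{\phi_{k'}\}_{k'})$ is an EPFI pair.

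The main obstacle --- really the only subtlety --- is being careful that the quantity appearing in the entanglement gap is genuinely the von Neumann entropy of a single-system reduced state, so that Fannes' inequality applies directly with a dimension $d$ that is only singly exponential (polynomial $\log d$); if one instead tried to bound the trace distance between the \emph{bipartite} pure states one would learn nothing about the entropies. A secondary point worth stating explicitly is why the additive constant $1/2e$ (rather than $2$, as in the relative-entropy case) sets the threshold on $\eta$: it is precisely the uniform bound on $c(\cdot)$ in Fannes' inequality, which is what forces $\eta \geq 1/2e + 1/\poly(n)$ rather than a larger constant. Everything else is a direct transcription of the \Cref{th:PREFI} argument, and I would keep the proof to a short paragraph mirroring that one.
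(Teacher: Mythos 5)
Your proposal is correct and matches the paper's own proof essentially step for step: the EPFI states are the reduced density matrices on subsystem $A$, indistinguishability and efficient generation are inherited by discarding $B$, and pairwise farness follows from Fannes' inequality with the uniform bound $c(\delta)\le 1/2e$ explaining the threshold $\eta \ge 1/2e + 1/\poly(n)$. Your added remarks (the data-processing reduction for indistinguishability and the role of the $1/2e$ constant) are just slightly more explicit versions of what the paper leaves implicit.
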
 
\begin{proof}
The construction of an EPFI pair given a pair of pseudoentangled pair of
  families of pure states is slightly different to the one of
  \Cref{th:PREFI}. We define the two families by with the reduced density matrices of the pseudoentangled states: 
  \begin{align*}
    \big\{\rho_{A,\psi}^{k(\lambda)}\big\}_{k(\lambda)} =
    \big\{\Tr_B\big(|\psi^{k(\lambda)}\rangle\langle \psi^{k(\lambda)}|_{AB}\big)\Big\}_{k(\lambda)},
    \text{ and  }
    \big\{\sigma_{A, \phi}^{k'(\lambda)} \big\}_{k'(\lambda)}=
    \big\{\Tr_B\big(|\phi^{k'(\lambda)}\rangle\langle\phi^{k'(\lambda)}|_{AB}\big)\Big\}_{k'(\lambda)}  \, ,
  \end{align*}
  where the entanglement is measured across the cut $(A:B)$. 
The properties of efficient generation and computational indistinguishability of
the EPFI pairs follow from the definition of $\eta$-gap pure pseudoentanglement.

We now prove that for all $k,k' \in \{0,1\}^{\kappa(\lambda)}$,  $\rho_{A, \psi}^{k(\lambda)}$ and $\sigma_{A,\phi}^{k'(\lambda)}$  are statistically far. We have that
\begin{align*}
            \Delta( \rho_{A, \psi}^{k(\lambda)},\sigma_{A, \phi}^{k'(\lambda)}) \geq \frac{\abs{S(\rho_{A, \psi}^{k(\lambda)})- S(\sigma_{A, \phi}^{k'(\lambda)})} - c}{2 n (\lambda)} = \frac{\abs{E\big(\ket{\psi^{k(\lambda)}}_{AB}\big)-E\big(|\phi^{k'(\lambda)}\rangle_{AB}\big)} - c}{2 n (\lambda)} \, ,
\end{align*}
where the first inequality follows from \Cref{lem:Fannes}, and the last equality
  uses the definition of the entanglement entropy. Therefore, using the fact that for all $k,k' \in \{0,1\}^{\kappa(\lambda)}$ the entanglement entropies of pairwise sampled states from the ensembles is  $ \eta \geq  1/2e +1/\poly(n)$,
\begin{align*}
    \Delta( \rho_{A, \psi}^{k(\lambda)},\sigma_{A, \phi}^{k'(\lambda)}) \geq
          \Omega\left(\frac{1}{\poly (\lambda)}\right) \quad \forall k, k' \in \{0,1\}^{\kappa(\lambda)} \, .
\end{align*}

\end{proof}

Every known construction of pseudoentanglement from pure state ensembles
\cite{ABF+23,ABV23,LREJ25} exhibits an entanglement entropy gap of at least
$1/2e+1/\poly(n)$, making them suitable for constructing EPFI pairs. While the definitions in \cite{ABV23,LREJ25} allow for a larger gap when
considering computationally efficient entanglement measures, it is the information-theoretic measure of entanglement entropy that determines the relationship between the gap and trace distance, and thus enables the construction of EPFI pairs. The connection between information-theoretic and computationally meaningful entanglement measures, along with its cryptographic implications, is explored in more detail in the following section.

\begin{remark}
  We notice that in the proof of \Cref{lem:polypseudo}, we do not need
  indistinguishability between the pure states, but only of subsystem $A$ (or
  $B$). 
  In this case, we can also achieve EPFI pairs under a weaker notion of
  pure-state pseudoentanglement.
\end{remark}

The use of entanglement entropy as a measure of entanglement in the case of pure states require a lower bound of $\omega(\log n))$ for the low entangled family. However, when computational measures of entanglement are taken into account, the entanglement gap can be even larger for pure states, i.e., $\Omega(n)$ vs. $o(1)$ for other entanglement measures as proven in \cite{LREJ25}. While it is true that entanglement entropy losses its operational meaning when it comes to quantify the distillable entanglement (or entanglement cost) taking into account computational efficiency, it is relevant for the construction of EPFI pairs.

\subsection{Mixed state pseudoentanglement implies EPFI pairs}

Let us now study the pseudoentanglement in the case of mixed states. The main result of this subsection is the construction of EPFI pairs from pseudoentangled mixed states. Let us first introduce our proposed definition of pseudoentanglement for mixed states.

\begin{definition} [Mixed $\eta$-gap pseudoentanglement] Let $\lambda \in \NN_{+}$ and $\eta : \NN_{+} \rightarrow \NN_{+} $ be arbitrary. A pair of families of mixed  bipartite states $\{\psi_{AB}^{k(\lambda)}\}_{k(\lambda)}$ and $\{\phi_{AB}^{k'(\lambda)}\}_{k'(\lambda)}$ indexed by $k, k' \in \{0,1\}^{\kappa(\lambda)}$ is said to have \textit{mixed $\eta$-gap pseudoentanglement} if : 
\begin{enumerate}
    \item \textbf{Efficient generation}: Given $k(\lambda)$ (or $k'(\lambda)$),  there exists a QPT algorithm A that on input $(1^{\lambda}, k,  b)$ (or $(1^{\lambda}, k', b)$) outputs $\psi_{AB}^{k(\lambda)}$ (or $\phi_{AB}^{k'(\lambda)}$). 
    \item \textbf{Entanglement gap}: For all $k, k' \in \{0,1\}^{\kappa(\lambda)}$,
    \begin{align*}
        \left|E^{\infty}_R \left( \psi_{AB}^{k(\lambda)}\right) - E^{\infty}_R \left( \phi_{AB}^{k'(\lambda)}\right) \right| \geq \eta \, .
    \end{align*}
   \item  \textbf{Computational indistinguishability}: For any non-uniform QPT distinguisher $\DD $ with advice $\sigma_{\lambda}$ and any $m \in \poly(\lambda)$, there exists a negligible function $\nu (\lambda) > 0$ such that:
    \begin{equation*}
        \left| \probP_{\rho \leftarrow \{\psi_{AB}^{k(\lambda)}\}_{k(\lambda)} } [\DD(\sigma_{\lambda},\rho^{\otimes m}) = 1] - \probP_{\rho \leftarrow \{\phi_{AB}^{k'(\lambda)}\}_{k'(\lambda)} }  [\DD( \sigma_{\lambda}, \rho^{\otimes m}) = 1]  \right | \leq \nu (\lambda) \, .
    \end{equation*}
\end{enumerate}
\label{def:M-PSE}
\end{definition}

Our definition is based on the ones proposed by \cite{ABV23,GE24} but with two
modifications. Foremost, both families of states have to be efficiently
generated, contrary to previous definitions in which only the low entangled
family was the efficiently generated one. The other major difference is that we consider the (regularised) relative entropy of entanglement, instead of
the computational distillable entanglement or the computational entanglement cost, which was used in those results.

Unlike in \cite{ABV23,GE24}, our proposed definition of pseudoentanglement is defined in the asymptotic IID setting without taking into account computational entanglement measures. As proposed in \cite{ABV23}, their pseudoentanglement construction can be extended to the computational asymptotic IID setting by taking into account taking into account the measures,
    \begin{align*}
        &\hat{E}_C^{\infty} (\rho_{AB}) = \inf_{\epsilon \in (0,1]} \lim_{t \rightarrow \infty} \sup \frac{1}{t} \hat{E}^{\epsilon}_C (\rho_{AB}^{ \otimes t})\, , \\
         &\hat{E}^{\infty}_D (\rho_{AB}) = \inf_{\epsilon \in (0,1]} \lim_{t \rightarrow \infty} \inf \frac{1}{t} E^{\epsilon}_D(\rho_{AB}^{ \otimes t}) \, ,
    \end{align*}
where $\hat{E}^{\epsilon}_C (\rho)$ and $\hat{E}^{\epsilon}_D(\rho)$ are defined in \Cref{def:comptentcost} and \Cref{def:cmpdistent}. Let us now show that our definition is a relaxation in the condition of the entanglement gap with respect to the ones of \cite{ABV23,GE24} in the asymptotic IID setting.

Given a “highly entangled” family $\{\phi_{AB}^{k'(\lambda)}\}_{k'(\lambda)}$ such that $\hat{E}^{\infty}_D\left( \{k',\phi_{AB}^{k'}\}\right) \geq d(\lambda)$ and a “low entangled”  family $\{\psi_{AB}^{k(\lambda)}\}_{k(\lambda)}$ such that $\hat{E}^{\infty}_C\left( \{k,\psi_{AB}^{k}\}\right) \leq c(\lambda)$ with $c(\lambda) < d(\lambda)$. Therefore, given two states $\phi_{AB}^{k'(\lambda)}$ and $\psi_{AB}^{k(\lambda)}$,
\begin{align*}
    d(\lambda) - c(\lambda) \leq \hat{E}^{\infty}_D \left( \{k',\phi_{AB}^{k'}\}\right)- \hat{E}^{\infty}_C\left( \{k,\psi_{AB}^{k}\}\right) \leq E_R^{\infty} ( \phi_{AB}^{k'(\lambda)}) - E_R^{\infty} (\psi_{AB}^{k(\lambda)})
\end{align*}
for all $k$ and $k'$, since for any family $\hat{E}_D^{\infty} \leq E_D^{\infty}\leq E_R^{\infty}\leq E_C^{\infty} \leq \hat{E}_C^{\infty}$ under LOCC operations.

\begin{corollary}[Mixed pseudoentanglement implies EPFI pairs]\label{lem:mixpolypseudo} 
For $\eta \geq 2 +1/\poly(n)$ and assuming the existence of mixed  $\eta$-pseudoentanglement, then EPFI pairs exist.    
\end{corollary}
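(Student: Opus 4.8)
The plan is to mirror the proof of \Cref{th:PREFI} almost verbatim, substituting the relative entropy of resource by the regularised relative entropy of entanglement and invoking the corresponding Fannes-type inequality. Concretely, given a pair of families $\{\psi_{AB}^{k(\lambda)}\}_{k(\lambda)}$ and $\{\phi_{AB}^{k'(\lambda)}\}_{k'(\lambda)}$ with mixed $\eta$-gap pseudoentanglement, I would declare the EPFI pair to be these two families directly (as in the pseudoresource case, no reduced density matrices are needed since the entanglement measure here is defined for mixed states). Efficient generation and computational indistinguishability of the EPFI pair are then immediate from items 1 and 3 of \Cref{def:M-PSE}.

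For the pairwise statistical farness, I would apply \Cref{lem:mixFannes}: for any states $\rho_{AB},\sigma_{AB}$ with $\Delta(\rho_{AB},\sigma_{AB})\leq\epsilon$ we have $|E^{\infty}_R(\rho_{AB})-E^{\infty}_R(\sigma_{AB})|\leq \epsilon\log d + (1+\epsilon)h\!\left(\frac{\epsilon}{1+\epsilon}\right)$. Since $h\!\left(\frac{\epsilon}{1+\epsilon}\right)\leq h(\epsilon)\leq 2\epsilon$ for the relevant range (or one can just bound the binary entropy term by a constant plus a vanishing contribution), this rearranges to a lower bound of the form
\begin{align*}
  \Delta(\psi_{AB}^{k(\lambda)},\phi_{AB}^{k'(\lambda)}) \;\geq\; \frac{\left|E^{\infty}_R(\psi_{AB}^{k(\lambda)})-E^{\infty}_R(\phi_{AB}^{k'(\lambda)})\right| - 2}{\log d + O(1)} \;\geq\; \frac{\eta - 2}{\log d + O(1)}\,,
\end{align*}
which, using $\eta\geq 2 + 1/\poly(n)$ and $\log d = \poly(n)$ (as the states live on polynomially many qubits), gives $\Delta(\psi_{AB}^{k(\lambda)},\phi_{AB}^{k'(\lambda)})\geq\Omega(1/\poly(\lambda))$ for all $k,k'$. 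This verifies all three EPFI conditions, so EPFI pairs exist, and since this is a corollary of the pseudoresource theorem it can be stated as following from \Cref{th:PREFI} instantiated with entanglement (free states $=$ separable states) together with \Cref{lem:mixFannes}.

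The only subtlety — and the place where I would be most careful — is matching the constant $2$ and the scaling of the denominator. In \Cref{th:PREFI} the denominator is $\kappa=\polylog(d)$, but \Cref{lem:mixFannes} has $\log d$ in front of $\epsilon$, where $d$ is the dimension of the relevant Hilbert space; since the pseudoentangled states are on $n_A(\lambda)+n_B(\lambda)=\poly(\lambda)$ qubits, $\log d = \poly(\lambda)$ rather than $\polylog$, so the resulting gap in trace distance is $\Omega(1/\poly(\lambda))$, which is exactly what the EPFI definition requires — nothing tighter is needed. The additive binary-entropy term $(1+\epsilon)h\!\left(\frac{\epsilon}{1+\epsilon}\right)$ is bounded (at most $1$), and since we only need $\eta$ to exceed this bound by an inverse-polynomial amount, requiring $\eta\geq 2+1/\poly(n)$ comfortably absorbs it. Hence there is no real obstacle; the corollary is a direct transcription of the pseudoresource argument with entanglement as the chosen resource.
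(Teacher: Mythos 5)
Your proposal is correct and matches the paper's argument, which likewise treats the corollary as \Cref{th:PREFI} instantiated with entanglement as the resource and \Cref{lem:mixFannes} as the continuity bound, yielding $\Delta \geq (\eta-2)/\poly(\lambda)$. (One nitpick: $h(\epsilon)\leq 2\epsilon$ fails for small $\epsilon$, but your fallback of bounding $(1+\epsilon)h\bigl(\tfrac{\epsilon}{1+\epsilon}\bigr)$ by the constant $2$ is exactly what produces the $-2$ in the numerator, so the argument stands.)
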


Please, note that the construction is similar to the one of \Cref{th:PREFI}, but taking into account that in this case the studied resource is the entanglement. In the case of entanglement, the regularised relative entropy of resource is equivalent to the regularised relative entropy of entanglement, which asymptotically continuity bound is given by \Cref{lem:mixFannes}.

Therefore, every state that has a large pseudoentangled gap $d(\lambda)-c(\lambda) \geq 2 +1/\poly(n)$ are eligible for building EPFI pairs. On the other hand, there potentially  exist pairs of families of mixed states which does not present a gap in the computational measures of entanglement, i.e., they are not pseudoentangled following the definition of \cite{ABV23,GE24}, but from which EPFI pairs can be constructed. 

\subsection{Beyond EPFI: computationally locked entanglement}

Previous definitions of pseudoentanglement focus on efficiently generated states with low entanglement that are computationally indistinguishable from highly entangled states. This is an analog of the “classical” concept of pseudorandomness: a simple object (such as a pseudorandom string or pseudoentangled state) that can replace a complex one (resp. random string or highly entangled state).

On the other hand, given how central entanglement is in quantum information, and the plethora of applications that require highly entangled states, we can flip this question and ask for efficiently generated states with high entanglement that are computationally indistinguishable from low entangled ones (that may be efficiently generated or not). We denote this as {\em computationally locked entanglement}, and it can be used to conceal entanglement that could be distilled only with the help of a secret key. We notice that pseudoentangled states where both low- and high-entangled families are efficiently generated (as in our definition of pseudoentanglement)
exhibit computationally locked entanglement.

\begin{definition} [Computationally locked entanglement]
Let $\lambda \in \NN$, $n : \NN \rightarrow \NN$ be a polynomially bounded function, $\epsilon : \NN \rightarrow [0,1]$ and $c,d : \NN \rightarrow \NN $ with $c < d$. A family of $2n(\lambda)$-qubit bipartite states $\{\psi_{AB}^{k(\lambda)}\}_{k(\lambda)}$ is said to have \textit{computationally locked entanglement} $(\epsilon, c, d )$ if there is a family of $2n(\lambda)$-qubit bipartite states $\{\phi_{AB}^{k'(\lambda)}\}_{k'(\lambda)}$, such that:
\begin{enumerate}
    \item  The computational entanglement cost of the family $\{\phi_{AB}^{k'(\lambda)}\}_{k'(\lambda)}$ is upper bounded as $\hat{E}^{\epsilon}_C(\{k, \phi_{AB}^{k'}\}) \leq c$.
    \item Given $k(\lambda)$,  there exists a QPT algorithm A that on input $(1^{\lambda}, k,  b)$  outputs $\psi_{AB}^{k(\lambda)}$.
    \item The computational distillable entanglement  of the family $ \{\psi_{AB}^{k(\lambda)}\}_{k(\lambda)}$ is lower bounded as $\hat{E}^{\epsilon}_D (\{k, \psi_{AB}^{k}\})\geq d$.
    \item  For any non-uniform QPT distinguisher $\DD $ with advice $\sigma_{\lambda}$ and any $m \in \poly(\lambda)$, there exists a negligible function $\nu (\lambda) > 0$ such that:
    \begin{equation*}
        \left| \probP_{\rho \leftarrow \{\psi_{AB}^{k(\lambda)}\}_{k(\lambda)} } [\DD(\sigma_{\lambda},\rho^{\otimes m}) = 1] - \probP_{\rho \leftarrow \{\phi_{AB}^{k'(\lambda)}\}_{k'(\lambda)} }  [\DD( \sigma_{\lambda}, \rho^{\otimes m}) = 1]  \right | \leq \nu (\lambda) \, .
    \end{equation*}
\end{enumerate}
\label{def:CLE}
\end{definition}

Computational locked entanglement can be viewed as a ``dual" notion to pseudoentanglement, as defined in \cite{ABV23,GE24}. An example of computational locked entanglement is the construction of pseudorandom density matrices \cite{BMB+24}. However, in this case, the "low-entangled" family is also efficiently generated. It would be interesting to investigate how pseudoentanglement and computationally locked states relate to each other, i.e., if one implies the other, or if they are incomparable.

We now prove the intuitive consequences of computationally locked states for parties that do not have access to the key.

\begin{lemma} Given a family of $2n(\lambda)$-qubit bipartite states $\{\psi_{AB}^{k(\lambda)}\}_{k(\lambda)}$ that have computationally locked entanglement $(\epsilon, c, d )$, where $c < d$, its computational distillable entanglement without access to the key $k$ is upper bounded as $\hat{E}^{\epsilon}_D (\{\psi_{AB}^{k}\}) \leq c$, as defined in \Cref{def:entanglement-distillable-wo-key} .
\end{lemma}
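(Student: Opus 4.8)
The plan is to argue by contradiction: if a party without the key could distill more than $c$ EPR pairs from the family $\{\psi_{AB}^{k(\lambda)}\}_{k(\lambda)}$, then this distillation procedure would distinguish $\{\psi_{AB}^{k(\lambda)}\}_{k(\lambda)}$ from the low-cost family $\{\phi_{AB}^{k'(\lambda)}\}_{k'(\lambda)}$, contradicting property 4 of \Cref{def:CLE}. First I would suppose toward a contradiction that $\hat{E}^{\epsilon}_D(\{\psi_{AB}^{k}\}) > c$, so there is an efficient LOCC map family $\{\hat{\Gamma}^\lambda\}_\lambda$ that, applied to $\psi_{AB}^{k(\lambda)}$ alone (no key), outputs a state with fidelity at least $1-\epsilon$ to $\Phi^{\otimes d'}$ for some $d' \geq c + 1$ (in fact $d' > c$ suffices; a single extra qubit is enough to separate the two regimes up to negligible corrections when we take several copies).

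Next I would use property 1 of \Cref{def:CLE}: since $\hat{E}^{\epsilon}_C(\{k', \phi_{AB}^{k'}\}) \leq c$, each $\phi_{AB}^{k'(\lambda)}$ is $\epsilon$-close in fidelity to an LOCC-processed version of only $c$ EPR pairs. Applying the same distillation map $\hat{\Gamma}^\lambda$ to $\phi_{AB}^{k'(\lambda)}$ therefore yields a state that is an efficient LOCC image of $c$ EPR pairs (composition of LOCC maps is LOCC), so it cannot have fidelity close to $\Phi^{\otimes d'}$ with $d' > c$: by the monotonicity of distillable entanglement / entanglement cost under LOCC and the fact that $c$ Bell pairs have one-shot distillable entanglement essentially $c$, the overlap with $\Phi^{\otimes (c+1)}$ is bounded away from $1$. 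Hence the test ``apply $\hat{\Gamma}^\lambda$, then measure the projector onto $\Phi^{\otimes d'}$'' accepts $\psi_{AB}^{k(\lambda)}$ with probability $\geq 1-\epsilon$ and accepts $\phi_{AB}^{k'(\lambda)}$ with probability bounded below $1-\epsilon$ by a non-negligible gap. Since $\hat{\Gamma}^\lambda$ is efficient and the projective measurement onto a tensor power of Bell states is efficient, this whole procedure is a QPT distinguisher, and amplifying with $m = \poly(\lambda)$ copies drives the distinguishing advantage close to $1$, contradicting property 4.

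The main obstacle I expect is making the fidelity bookkeeping on the $\phi$ side airtight: property 1 only says $\phi_{AB}^{k'(\lambda)}$ is $\epsilon$-close to $\hat\Gamma_C^\lambda(k', \Phi^{\otimes c})$ for \emph{some} cost map, and then we further apply the distillation map, so I need the triangle inequality for fidelity (or for the purified/Bures distance) together with LOCC-monotonicity of $E^\epsilon_D$ to conclude that $F(\hat\Gamma^\lambda(\phi_{AB}^{k'(\lambda)}), \Phi^{\otimes d'})$ stays bounded away from $1$. Concretely, since $c$ Bell pairs can be LOCC-converted to a state of fidelity $1-\epsilon'$ with $\Phi^{\otimes (c+1)}$ only if $\epsilon'$ is bounded below by a constant, and fidelity is non-decreasing under LOCC, we get $F(\hat\Gamma^\lambda(\Phi^{\otimes c}), \Phi^{\otimes d'}) \leq 1 - \delta$ for a constant $\delta$, and then the $\epsilon$-closeness of $\phi_{AB}^{k'(\lambda)}$ to that state costs only an additional $O(\sqrt{\epsilon})$ in fidelity. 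Choosing parameters so that $\epsilon$ is small enough (or treating $\epsilon$ as $\negl(\lambda)$ as is standard) leaves a constant gap, which copy-amplification converts into a distinguisher of overwhelming advantage. Writing this out carefully — in particular quoting the exact form of LOCC-monotonicity for the one-shot measures from \Cref{def:cmpdistent} and \Cref{def:comptentcost} and handling the ``$c$ vs.\ $c+1$'' separation — is the only delicate point; the rest is a direct reduction.
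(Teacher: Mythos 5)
Your proposal is correct and follows essentially the same route as the paper: both argue by contradiction that a key-free LOCC map distilling more than $c$ EPR pairs from $\{\psi_{AB}^{k}\}_k$ would, when run on the low-cost family $\{\phi_{AB}^{k'}\}_{k'}$ (whose members are LOCC images of only $c$ Bell pairs), yield an efficient distinguisher contradicting computational indistinguishability. Your version is in fact more careful than the paper's two-line argument, since you make explicit the distinguisher (project onto $\Phi^{\otimes d'}$ after distilling) and the fidelity bookkeeping that the paper compresses into the citation of $\hat{E}^{\epsilon}_D < \hat{E}^{\epsilon}_C$.
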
 

\begin{proof} 

It can be proven by contradiction. Suppose there exist a family of states $\{\psi_{AB}^{k(\lambda)}\}_{k(\lambda)}$ that have computationally locked entanglement as defined in \Cref{def:CLE} such that, not given the keys $k$, $\hat{E}^{\epsilon}_D (\{ \psi_{AB}^{k}\}) > c$. As proven in \cite{ABV23}, $\hat{E}^{\epsilon}_D < \hat{E}^{\epsilon}_C$. Moreover, $\hat{E}^{\epsilon}_C (\{k', \phi_{AB}^{k'(\lambda)}\}) < c$ by construction. Then, if there is a poly time algorithm that is able to distill $\hat{E}^{\epsilon}_D (\{ \psi_{AB}^{k(\lambda)}\}) > c$, it would be able to distinguish between the pair of families  $\{\psi_{AB}^{k(\lambda)}\}_{k(\lambda)}$ and $\{\phi_{AB}^{k'(\lambda)}\}_{k'(\lambda)}$ which is not possible by definition. 
\end{proof}

\begin{corollary}
Given a $2n(\lambda)$-qubit bipartite states
  $\{\psi_{AB}^{k(\lambda)}\}_{k(\lambda)}$ which has \textit{computationally locked entanglement} $(\epsilon, d )$, i.e., w.r.t. a family of $2n(\lambda)$-qubit separable bipartite states $\{\phi_{AB}^{k'(\lambda)}\}_{k'(\lambda)}$, no entanglement can be distilled without the respective keys $k$.
  \end{corollary}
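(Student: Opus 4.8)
The plan is to derive this corollary directly from the preceding lemma by specializing to the case where the ``low-entangled'' reference family consists of separable states. The corollary statement is slightly informal (the parameter is written as $(\epsilon,d)$ rather than $(\epsilon,c,d)$), so first I would interpret it as follows: since each $\phi_{AB}^{k'(\lambda)}$ is separable, it carries no entanglement at all, so its computational entanglement cost is $\hat{E}^{\epsilon}_C(\{k',\phi_{AB}^{k'}\}) = 0$. Thus the hypothesis of \Cref{def:CLE} is met with $c = 0$ (and any $d \geq 1/\poly$), and we are exactly in the situation of the previous lemma with $c=0$.

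The key steps, in order, would be: (i) recall that separable states can be prepared from zero EPR pairs by LOCC — each party locally prepares its share of $\sum_i p(i)\,\rho_A^i \otimes \rho_B^i$ using shared classical randomness sampled from $p(i)$, which is free classical communication — so $\hat{E}^{\epsilon}_C(\{k',\phi_{AB}^{k'}\}) = 0$ for any $\epsilon$; (ii) invoke the preceding lemma with this choice of reference family, concluding that without the keys the computational distillable entanglement satisfies $\hat{E}^{\epsilon}_D(\{\psi_{AB}^{k}\}) \leq 0$; (iii) observe that $\hat{E}^{\epsilon}_D$ is non-negative (one can always ``distill'' zero EPR pairs, and fidelity with the empty state is $1$), so in fact $\hat{E}^{\epsilon}_D(\{\psi_{AB}^{k}\}) = 0$, i.e.\ no entanglement whatsoever can be distilled by an efficient LOCC procedure lacking the key. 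This is the precise meaning of ``no entanglement can be distilled without the respective keys.''

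I would write it as a short proof: ``By assumption $\{\phi_{AB}^{k'(\lambda)}\}_{k'(\lambda)}$ is a family of separable states; since separable states require no EPR pairs to prepare under LOCC, $\hat{E}^{\epsilon}_C(\{k',\phi_{AB}^{k'}\}) = 0$. Applying the previous lemma with $c = 0$ yields $\hat{E}^{\epsilon}_D(\{\psi_{AB}^{k}\}) \leq 0$, and since the distillable entanglement is non-negative it equals $0$.'' No heavy calculation is needed; the content is entirely a specialization of the earlier lemma plus the elementary fact that separable states are LOCC-free.

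The main obstacle — such as it is — is purely a matter of bookkeeping and precision rather than mathematical depth: the corollary's parameter notation $(\epsilon,d)$ is inconsistent with \Cref{def:CLE}, so I would need to state clearly that I am instantiating the general definition with $c=0$, and I should make sure the claim $\hat{E}^{\epsilon}_C = 0$ for separable states is justified (it follows because the free operations of the entanglement resource theory, LOCC, generate exactly the separable states, which is stated in the preliminaries). A secondary subtlety is whether the previous lemma's conclusion $\hat{E}^{\epsilon}_D(\{\psi_{AB}^{k}\}) \leq c$ should be read with a strict or non-strict inequality when $c=0$; either way, combined with non-negativity of $\hat{E}^{\epsilon}_D$, the conclusion that no entanglement is distillable is immediate, so this does not cause a real difficulty.
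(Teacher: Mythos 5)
Your proposal is correct and matches exactly what the paper intends: the corollary is stated without proof as an immediate specialization of the preceding lemma, obtained by instantiating the reference family with separable states so that $c=0$, whence $\hat{E}^{\epsilon}_D(\{\psi_{AB}^{k}\}) \leq 0$ and non-negativity gives zero distillable entanglement. Your added care about the $(\epsilon,d)$ versus $(\epsilon,c,d)$ notation and the justification that separable states have zero computational entanglement cost under LOCC only makes the argument more explicit than the paper's.
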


An application of the computationally locked functionality is an authenticated quantum teleportation protocol. The scheme is similar to the original quantum teleportation \cite{BBC+93} but in this case the receiver cannot access the teleported state without using a secret key in each interaction, being not necessary to authenticate the classical channel before or during the teleportation protocol. Moreover, the family of states with its corresponding keys can be used a polynomial number of times, unlike in the Clifford encryption scheme \cite{DLT02,ABOE08}.

Further applications of computationally locked entanglement for quantum networks in which the there is a necessity of distributing entanglement while preventing the users for accessing it. Since quantum networks routing is based on the principle of entanglement swapping \cite{BCZ98,Ca18}, encoding the nodes of the quantum network with computationally locked entanglement allows certifying the routing of the network.

\section*{Acknowledgments}

 Álvaro Yángüez thanks Pere Munar, Lorenzo Leone, Asad Raza, Ludovico Lami, Salvatore F.E. Oliviero and Francesco Anna Mele for helpful discussions. 
 Alex B. Grilo and Álvaro Yángüez are supported by the European Union's Horizon Europe Framework Programme under the Marie Sklodowska Curie Grant No. 101072637, Project Quantum-Safe Internet (QSI).

\bibliography{bibliography.bib}

\end{document}